\newtheorem{theorem}[subsection]{Theorem}
\newtheorem{lemma}[subsection]{Lemma}
\newtheorem{proposition}[subsection]{Proposition}
\newtheorem{corollary}[subsection]{Corollary}
\theoremstyle{definition}
\newtheorem{definition}[subsection]{Definition}
\theoremstyle{remark}
\newtheorem{remark}[subsection]{Remark}
\newtheorem{example}[subsection]{Example}
\newcommand{\Id}{\operatorname{Id}}
\newcommand{\GL}{\operatorname{GL}}
\title[An integrable deformation of a Lie algebra of skew hermitian 
$\mathbb{Z} \times \mathbb{Z} $-matrices]{A construction of solutions of an integrable deformation of a commutative Lie algebra of skew hermitian 
$\mathbb{Z} \times \mathbb{Z} $-matrices}
\author{Aloysius G. Helminck}
\address{Department of Mathematics,
University of Hawaii at Manoa,
Honolulu, HI 96822}
\email{helminck@hawaii.edu}
\author{Gerardus F.  Helminck}
\address{KdV Institute,
University of Amsterdam,
1090 GE Amsterdam,
The Netherlands}
\email{g.f.helminck@uva.nl}
\thanks{Dedicated to the memory of G. van Dijk}
\begin{document}

\begin{abstract}
Inside the algebra $LT_{\mathbb{Z}}(R)$ of $\mathbb{Z} \times \mathbb{Z}$-matrices with coefficients from a commutative $\mathbb{C}$-algebra $R$
that have only a finite number of nonzero diagonals above the central diagonal, we consider a deformation of a commutative Lie algebra $\mathcal{C}_{sh}(\mathbb{C})$ of finite band skew hermitian matrices that is 
different from the Lie subalgebras that were deformed at the discrete KP hierarchy and its strict version.
The evolution equations that the deformed generators of $\mathcal{C}_{sh}(\mathbb{C})$ 
have to satisfy are determined by the decomposition of $LT_{\mathbb{Z}}(R)$ in the direct sum of an algebra of lower triangular matrices and the finite band skew hermitian matrices. This yields then 
the $\mathcal{C}_{sh}(\mathbb{C})$-hierarchy. 
We show that the projections of a solution satisfy zero curvature relations and that it suffices to solve an associated Cauchy problem. Solutions of this type can be  obtained by finding appropriate vectors in the $LT_{\mathbb{Z}}(R)$-module of oscillating matrices, the so-called wave matrices, that satisfy a set of equations in the oscillating matrices, 
called the linearization of the $\mathcal{C}_{sh}(\mathbb{C})$-hierarchy. Finally, a Hilbert Lie group will be introduced from which wave matrices for the $\mathcal{C}_{sh}(\mathbb{C})$-hierarchy are constructed. There is a real analogue of the $\mathcal{C}_{sh}(\mathbb{C})$-hierarchy called the $\mathcal{C}_{as}(\mathbb{R})$-hierarchy. It consists of a deformation of a commutative Lie algebra $\mathcal{C}_{as}(\mathbb{R})$ of anti-symmetric matrices. We will properly introduce it here too on the way and mention everywhere the corresponding result for this hierarchy, but we leave its proofs mostly to the reader.
\end{abstract}

\maketitle

{\bf Subject classification:22E65, 35Q58, 37K10, 58B25} .\\

{\bf Keywords: 
The $\mathcal{C}_{sh}(\mathbb{C})$-hierarchy, Lax form,
 zero curvature relations, Cauchy problem, linearization, wave matrices}

\section{\boldmath Introduction}

The major part of the scientific work of G. van Dijk centered around developing representation theoretic questions
connected with vector bundles over homogeneous spaces $G/H$, where $G$ is a locally compact group and $H$ a closed subgroup of $G$. A main goal in this field is to decompose a given representation in components that cannot be split anymore, the so-called irreducible components. 
This leads to a wide range of research problems to which he contributed significantly. We mention a few:
a continuous search for new ways and techniques to decompose, what is the role of the structure of the space $G/H$ in the decomposition and which of the known representations fit in the new one. The following selection illustrates the width of his interests: 
\cite{vD74}, \cite{vD78}, \cite{vD86}, \cite{vD+AP1999}, \cite{vD+YS2000}, \cite{vD+MP01} and \cite{vD2009}.
Gerrit mostly worked with concrete spaces and favorite examples of the spaces $G/H$ were the symmetric spaces, see \cite{vD+MK86}, \cite{vD+MP86}, \cite{vD+EB94}, \cite{vD+SH97} and \cite{vD+VM03}. Defined in the most general way, symmetric spaces are spaces $G/H$, with $G$ a topological group and $H$ the fixed point group of a continuous involution $\sigma$ of $G$. 

In the present paper we present another role of symmetric spaces, namely as a mean to construct solutions for 
compatible systems of Lax equations, so-called integrable hierarchies, for difference operators or pseudo difference operators. A wide collection of compatible systems have been discussed, both on the physical, see \cite{OT1},\cite{OT2} and \cite{OT3}, as on the mathematical side, see e.g. \cite{UT}, \cite{Takebe}, \cite{APvM99} and \cite{HPP1}. Here we discuss a decomposition of 
$LT_{\mathbb{Z}}(R)$ into the direct sum of two Lie algebras different from the ones used in \cite{APvM99} and \cite{HPP1} to get the discrete KP hierarchy and its strict version and deform a commutative Lie algebra $\mathcal{C}_{sh}(\mathbb{C})$ of skew hermitian matrices 
leading to the $\mathcal{C}_{sh}(\mathbb{C})$-hierarchy. 
The projections of a solution of the hierarchy are shown to satisfy zero curvature relations and we present a Cauchy problem whose solutions are sufficient to produce solutions of the hierarchy. Solutions of this Cauchy problem can be obtained by finding appropriate vectors in the $LT_{\mathbb{Z}}(R)$-module of oscillating matrices. These vectors are called wave matrices and satisfy a set of equations in the oscillating matrices, called the linearization of the $\mathcal{C}_{sh}(\mathbb{C})$-hierarchy. Finally, a Hilbert Lie group $G(2)(\mathbb{C})$ will be introduced from which wave matrices for the $\mathcal{C}_{sh}(\mathbb{C})$-hierarchy are constructed.

The contents of the various sections is as follows: Section \ref{sec2} describes the algebra $LT_{\mathbb{Z}}(R)$, its relevant decomposition and the commutative algebra $\mathcal{C}_{sh}(\mathbb{C})$. 
Further we 
present there the type of deformation that will be considered in $LT_{\mathbb{Z}}(R)$. 
In section \ref{sec3} 
we cluster a number of Lax equations that the deformation has to satisfy, into an integrable hierarchy
and show that the relevant projections of the deformed basic 
directions satisfy zero curvature relations. Further we present a Cauchy problem, whose solutions lead directly to solutions of the $\mathcal{C}_{sh}(\mathbb{C})$-hierarchy. A good context to produce these solutions is the $LT_{\mathbb{Z}}(R)$-module of oscillating matrices. Appropriate vectors in this module, the so-called wave matrices, satisfy a set of equations in the oscillating matrices, the linearization of the $\mathcal{C}_{sh}(\mathbb{C})$-hierarchy, that leads to solutions of the hierarchy.  Finally, one introduces a Hilbert Lie group $G(2)(\mathbb{C})$ and its unitary subgroup  $U(2)$ and constructs for each coset $gU(2), g \in G(2)(\mathbb{C}),$ a wave matrix of the $\mathcal{C}_{sh}(\mathbb{C})$-hierarchy.

\section{The algebra $LT_{\mathbb{Z}}(R)$}
\label{sec2}

The algebra where the central deformation of this paper takes place is that of the complex pseudo difference operators Ps$\Delta$. We use its realization $LT_{\mathbb{Z}}(R)$ as a subset of the space $M_{\mathbb{Z}}(R)$ of $\mathbb{Z} \times \mathbb{Z}$-matrices with 
coefficients 
from a commutative $\mathbb{C}$-algebra $R$. The algebras $R$ we work with throughout this paper are the complexifications of an algebra of real-valued functions $R(\mathbb{R})$, i.e. $R=\mathbb{C} \otimes_{\mathbb{R}} R(\mathbb{R})$ and we will denote $\alpha \otimes f$, $\alpha \in \mathbb{C}$ and $f \in R(\mathbb{R})$, simply by $\alpha f$. On $R$ complex conjugation is defined by $\overline{\alpha f}:=\overline{\alpha}f$. An element $f$ from $R(\mathbb{R})$ is called positive, if all its values are, and we write then $f>0$.
We start by recalling a number of basic notations in the algebra $LT_{\mathbb{Z}}(R)$

Each $A \in M_{\mathbb{Z}}(R)$ will be denoted as $A=(a_{ij})$ or as $A=(a_{(i,j)})$ if confusion in the labeling might occur.
On the space $M_{\mathbb{Z}}(R)$ we use the ordering of
columns and rows as in the finite dimensional case. 
The transpose $A^{T}$ of a matrix $A \in M_{\mathbb{Z}}(R)$ is given by the matrix $(a_{ji})$ and  the adjoint $A^{*}$ of $A$ is the matrix $(\overline{a_{ji}})$.
Any $A \in M_{\mathbb{Z}}(R)$ corresponds to an $R$-linear map.
Consider thereto the space of all $1  \times \mathbb{Z}$-matrices with coefficients from $R$
$$V=R^{\mathbb{Z}}=\{\vec{x}=(x_{n})=\left(
\begin{matrix}
&\hdots
&x_{n-1} 
&x_{n}
&x_{n+1}
&\hdots
\end{matrix}
\right) \mid x_{n} \in R\}
$$ 
and its 
subspace 
$$
V_{{\rm fin}}=\{ \vec{x}=(x_{n}) \in V \mid  x_{n} \neq 0 \text{ for only a finite number of }n \}.
$$
Define for each $i \in \mathbb{Z}$ the vector $\vec{e}\,(i) $ in $V_{{\rm fin}}$ by requiring its $i$-th coordinate to be equal to one and its remaining coordinates to be zero. 
Then $V_{{\rm fin}}$ is a free $R$-module with basis the $\{\vec{e}\,(i) \mid i \in \mathbb{Z} \}$. 
On $V_{{\rm fin}}$
we can define an $R$-linear action $M_{A}$ of $A=(a_{nm})$ by 
\begin{equation}
\label{algmult}
M_{A}(\vec{x}):= \vec{x}A .
\end{equation}
Hence, the matrix $A$ determines the $R$-linear 
map $M_{A} \in {\rm Hom}_{R}(V_{{\rm fin}},V)$. 

Next we present 
two types of matrices $A$ that generate the algebra Ps$\Delta$ and for which
$M_{A}$ is even defined on $V$. 
The first class 
is that of the diagonal matrices.
Given a collection of elements
$\{d(s) \mid  s
\in
\mathbb{Z}
\}$ from $R$, this defines 
the diagonal matrix
diag($d(s)$) in $M_{\mathbb{Z}}(R)$ with $d(s)$ as its ($s,s$)-entry.
The algebra of all diagonal matrices in $M_{\mathbb{Z}}(R)$ is denoted by $\mathcal{D}_{1}(R)$
and its group of units by $\mathcal{D}_{1}(R)^{*}$, i.e. all $\text{diag}(d(s)) \text{ with } d(s) \in R^{*} \text{ for all } s \in  \mathbb{Z}.$ We get an embedding $j_{1}: R \to \mathcal{D}_{1}(R)$ by putting $j_{1}(r)=r \Id$ for all $r \in R$.

The second class of examples form the shift matrix $\Lambda$, its inverse $\Lambda^{-1}$
and their powers, where the first corresponds to $M_{\Lambda}(\vec{e}\,(i))=\vec{e}\,(i+1)$.
The group $\{ \Lambda^{m} \mid m \in \mathbb{Z} \}$ normalizes $\mathcal{D}_{1}(R)$, for there holds for all $d \in \mathcal{D}_{1}(R)$
\begin{equation}
\label{Ldiag}
\Lambda^{m}  d \Lambda^{-m}=\Lambda^{m}  \text{diag}(d(s)) \Lambda^{-m}=\text{diag}(d(s+m)).
\end{equation}
A convenient tool in $M_{\mathbb{Z}}(R)$ is decomposing a matrix $A=(a_{ij}) \in M_{\mathbb{Z}}(R)$ in its diagonals. 
If $m \in \mathbb{Z}$, then the $m$-th {\it diagonal} of $A$ is by definition 
$$
d_{m}(A) \Lambda^{m}, \text{ with } d_{m}(A)=\text{diag}(a_{(s,s+m)}) \in \mathcal{D}_{1}(R),
$$
and $m$ determines if the diagonal is positive or negative. 
Then each matrix can be split as
 \begin{equation}
\label{diagdeco}
 A=\sum_{m \in \mathbb{Z}} d_{m}(A)\Lambda^{m}
\end{equation}
Let $LT_{\mathbb{Z}}(R)$ be the collection of all matrices in $M_{\mathbb{Z}}(R)$ that have only a finite number of nonzero positive diagonals. Relation (\ref{Ldiag}) implies now the following property
\begin{lemma}
\label{L1.1}
If $A \in LT_{\mathbb{Z}}(R)$ is equal to its $\ell$-th diagonal and $B \in LT_{\mathbb{Z}}(R)$ is equal to its $n$-the diagonal, then $AB$ is equal to its $\ell +n$-th diagonal. In particular, $LT_{\mathbb{Z}}(R)$ is an algebra w.r.t. matrix multiplication.
\end{lemma}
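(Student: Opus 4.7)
The plan is to reduce the claim to a single-line computation using equation (\ref{Ldiag}) and then bootstrap it to the full statement via the diagonal decomposition (\ref{diagdeco}).

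For the first assertion, I would start by writing $A = d_{\ell}(A)\Lambda^{\ell}$ and $B = d_{n}(B)\Lambda^{n}$ and simply expand:
\[
AB = d_{\ell}(A)\Lambda^{\ell}d_{n}(B)\Lambda^{n} = d_{\ell}(A)\bigl(\Lambda^{\ell}d_{n}(B)\Lambda^{-\ell}\bigr)\Lambda^{\ell+n}.
\]
By (\ref{Ldiag}) the inner parenthesis is the diagonal matrix $\text{diag}(b_{(s+\ell,\,s+\ell+n)})$, so $AB$ equals a product of two diagonal matrices in $\mathcal{D}_{1}(R)$ times $\Lambda^{\ell+n}$. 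Since $\mathcal{D}_{1}(R)$ is a commutative algebra of diagonal matrices, this product is again diagonal, giving the $(\ell+n)$-th diagonal $\text{diag}\bigl(a_{(s,s+\ell)}\,b_{(s+\ell,s+\ell+n)}\bigr)\Lambda^{\ell+n}$ as required.

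For the ``in particular'' clause, I would write arbitrary $A,B \in LT_{\mathbb{Z}}(R)$ via (\ref{diagdeco}) as $A = \sum_{m \le N_{A}} d_{m}(A)\Lambda^{m}$ and $B = \sum_{k \le N_{B}} d_{k}(B)\Lambda^{k}$, where $N_{A},N_{B}\in\mathbb{Z}$ bound the highest nonzero positive diagonals. Before invoking the first part term by term, I must check two finiteness points: (i) the ordinary matrix product $AB$ is well-defined entrywise, and (ii) the resulting diagonal decomposition of $AB$ has only finitely many positive diagonals. For (i), the $(i,j)$-entry is $\sum_{k} a_{ik}b_{kj}$, and $a_{ik}=0$ unless $k-i\le N_{A}$ while $b_{kj}=0$ unless $j-k\le N_{B}$, so $k$ is confined to the finite interval $[j-N_{B},\,i+N_{A}]$ and the sum is finite. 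For (ii), grouping by the total index $p=m+k$ one gets
\[
AB = \sum_{p}\Bigl(\sum_{\substack{m+k=p\\ m \le N_{A},\,k \le N_{B}}} d_{m}(A)\Lambda^{m}d_{k}(B)\Lambda^{k}\Bigr),
\]
and each inner sum is finite (it runs over $m \in [p-N_{B},N_{A}]$); by the first part each summand is the $p$-th diagonal of $AB$, so the coefficient vanishes for $p > N_{A}+N_{B}$. This gives $AB \in LT_{\mathbb{Z}}(R)$.

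The only genuine subtlety is the bookkeeping of (i) and (ii): one has to verify that the ``finite number of positive diagonals'' condition is exactly what guarantees both the convergence of the entrywise matrix product and closure under multiplication. Once this is pinned down, everything reduces to the clean rewriting afforded by (\ref{Ldiag}), so no real calculation is needed beyond the identity displayed above.
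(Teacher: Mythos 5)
Your proposal is correct and follows exactly the route the paper intends: the paper gives no written proof, merely remarking that the lemma follows from relation (\ref{Ldiag}), and your computation $AB = d_{\ell}(A)\bigl(\Lambda^{\ell}d_{n}(B)\Lambda^{-\ell}\bigr)\Lambda^{\ell+n}$ is precisely that observation spelled out. The finiteness bookkeeping for the entrywise product and for closure under multiplication is a welcome addition that the paper leaves implicit.
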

 We use the decomposition in (\ref{diagdeco}) to assign a degree to elements of $LT_{\mathbb{Z}}(R)$. For a nonzero $A $ in $LT_{\mathbb{Z}}(R)$ the degree is equal to $m$ if its highest nonzero diagonal is the $m$-th and the degree of the zero element is $- \infty$.

The algebra $LT_{\mathbb{Z}}(R)$ possesses a large collection of invertible elements. For, let 
$V \in LT_{\mathbb{Z}}(R)$ have the form
$
V=\sum_{i \leqslant m} v_i \Lambda^{i},
$
with 
$v_m \in  \mathcal{D}_1(R)^{*}$. Then one shows recursively
\begin{lemma}
\label{L2.1}
Each element $V$ in $LT_{\mathbb{Z}}(R)$ 
with an invertible leading coefficient 
is invertible in $LT_{\mathbb{Z}}(R)$. 
This class of invertible elements in $LT_{\mathbb{Z}}(R)$ 
forms the group $I(LT_{\mathbb{Z}}(R))$. 
\end{lemma}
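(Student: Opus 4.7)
The plan is to construct the inverse of $V$ diagonal-by-diagonal using a recursion driven by the invertibility of the leading coefficient $v_m$. Write a candidate inverse as $U = \sum_{j \leqslant n} u_j \Lambda^{j}$ for some $n$ to be determined, and expand the product $VU$ by using the key identity (\ref{Ldiag}), which gives $\Lambda^{i} u_j \Lambda^{-i} \in \mathcal{D}_{1}(R)$. Denoting the shifted diagonal by $u_j^{(i)} := \Lambda^{i} u_j \Lambda^{-i}$, one obtains the diagonal decomposition
\begin{equation*}
VU = \sum_{k \in \mathbb{Z}} \Bigl( \sum_{i+j = k} v_i u_j^{(i)} \Bigr) \Lambda^{k}.
\end{equation*}
For $VU$ to equal the identity we must kill every diagonal of positive or negative degree and make the zero-th diagonal equal to $\Id$. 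Comparing top diagonals forces $n = -m$, and then the leading equation is $v_m u_{-m}^{(m)} = \Id$, which by invertibility of $v_m$ in $\mathcal{D}_{1}(R)^{*}$ has the unique solution $u_{-m} = \Lambda^{-m} v_m^{-1} \Lambda^{m} \in \mathcal{D}_{1}(R)^{*}$.

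Next I would run the recursion downward. Assume inductively that $u_{-m}, u_{-m-1}, \dots, u_{-m-(N-1)}$ have been determined so that the diagonals of $VU$ of degree $0, -1, \dots, -N+1$ match those of $\Id$. The equation for the diagonal of degree $-N$ isolates the unknown $u_{-m-N}$ in the single term $v_m u_{-m-N}^{(m)}$, with all other summands already known; invertibility of $v_m$ together with the fact that conjugation by $\Lambda^{m}$ is an automorphism of $\mathcal{D}_{1}(R)$ then yields a unique $u_{-m-N} \in \mathcal{D}_{1}(R)$. The resulting $U$ lies in $LT_{\mathbb{Z}}(R)$ since its highest nonzero diagonal is the $(-m)$-th, and by construction $VU = \Id$. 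The symmetric argument, starting from $UV = \Id$ and solving for the diagonals of a left inverse, produces a left inverse whose leading term also equals $\Lambda^{-m} v_m^{-1} \Lambda^{m}$; a standard associativity argument then identifies the two inverses, so $V$ is two-sidedly invertible in $LT_{\mathbb{Z}}(R)$, and the inverse again has an invertible leading coefficient.

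Finally, to get the group property of $I(LT_{\mathbb{Z}}(R))$, observe that if $V_1, V_2$ both have invertible leading coefficients at degrees $m_1, m_2$, then by Lemma \ref{L1.1} the product $V_1 V_2$ has degree $m_1 + m_2$ with leading coefficient $v_{m_1} (\Lambda^{m_1} v_{m_2} \Lambda^{-m_1})$, which is a product of two elements of $\mathcal{D}_{1}(R)^{*}$ and hence still invertible. Combined with the invertibility of the inverse just constructed, this shows the collection forms a group under matrix multiplication.

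The one point that requires care, and which I would regard as the main obstacle, is the bookkeeping for the shift automorphism in the recursion: at each step one must verify that the unique solution for the next $u_{-m-N}$ really lies in $\mathcal{D}_{1}(R)$ (not in some larger space) and that the cumulative series remains in $LT_{\mathbb{Z}}(R)$, i.e. has no positive diagonals. Both follow from the fact that $\mathcal{D}_{1}(R)$ is stable under conjugation by powers of $\Lambda$ and that the degree estimates in Lemma \ref{L1.1} guarantee each $u_{-m-N}$ only affects diagonals of degree $\leqslant -N$ in $VU$, so the recursion is genuinely triangular.
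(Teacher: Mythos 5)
Your proof is correct and is exactly the recursive diagonal-by-diagonal construction the paper has in mind (the paper itself only says ``one shows recursively'' and omits all details, which you supply accurately, including the identification $u_{-m}=\Lambda^{-m}v_m^{-1}\Lambda^{m}$ and the closure of the class under products via Lemma \ref{L1.1}). The only point left implicit is that each degree-$(-N)$ equation involves a \emph{finite} sum, since $i+j=-N$ with $i\leqslant m$ and $j\leqslant -m$ forces $m-N\leqslant i\leqslant m$, but this is already contained in your triangularity remark.
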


Next we discuss the relevant decompositions of $LT_{\mathbb{Z}}(R)$. 
In the complex case we consider the real Lie subalgebra $\mathcal{SH}(R)$ of skew hermitian matrices in $LT_{\mathbb{Z}}(R)$, i.e.
$$
\mathcal{SH}(R)=\{ A \mid A \in LT_{\mathbb{Z}}(R), A^{*}=-A \}.
$$
A general element  $A$ in $\mathcal{SH}(R)$ has the form
\begin{equation}
\label{formSH}
A=\sum_{j=1}^{N} d_{j}(A)\Lambda^{j} +d_{0}(A)- \sum_{j=1}^{N} \Lambda^{-j}d_{j}(A)^{*}, 
\end{equation}
with $d_{0}(A) \in i \mathcal{D}_{1}(R(\mathbb{R}))$$\text{ and all remaining }d_{j}(A) \in \mathcal{D}_{1}(R)$.
Inside  
$\mathcal{SH}(\mathbb{C})$ we consider the real Lie subalgebra $\mathcal{C}_{sh}(\mathbb{C})$ spanned by the elements
\begin{equation}
\label{elCsh}
G_{j1}=\Lambda^{j}-\Lambda^{-j}, j \geqslant 1, G_{02}=i \Id, G_{j2}=i(\Lambda^{j}+\Lambda^{-j}), j \geqslant 1.
\end{equation}
It is convenient to introduce a notation for the index set of the basis (\ref{elCsh}) of $\mathcal{C}_{sh}(\mathbb{C})$. We write $\Sigma_{1}=\{ j1 \mid j \geqslant 1 \}$, $\Sigma_{2}=\{ j2 \mid j \geqslant 0 \}$ and $\Sigma=\Sigma_{1} \cup \Sigma_{2}$.
The Lie algebra $\mathcal{C}_{sh}(\mathbb{C})$ is clearly commutative. Moreover, it is maximal in 
$\mathcal{SH}(\mathbb{C})$ with respect to this property. For, let $A$ in 
$\mathcal{SH}(\mathbb{C})$ 
be an element that commutes with all the $\{ G_{\sigma} \mid \sigma \in \Sigma \}$. The matrix $A$ has the form (\ref{formSH}), where all diagonal matrices $\{ d_{j}(A) \mid j \geqslant 0\}$ are written as $c_{j}+i d_{j}$ with $c_{j}$ and $d_{j} \in \mathcal{D}_{1}(\mathbb{R})$ and $c_{0}=0$.
Because of Lemma \ref{L1.1}, the fact that $A$ and $G_{11}$ commute implies that their leading terms commute. 
If the leading term in $\Lambda $ is $d_{0}(A)$, then $d_{0}(A)=i j_{1}(s_{0}), s_{0} \in \mathbb{R},$ and $A=s_{0}G_{02} \in \mathcal{C}_{sh}(\mathbb{C})$. If the leading term in $\Lambda $ is $d_{N}(A)\Lambda^{N}, N \geqslant 1$, then this implies that $d_{N}(A)$ has the form $d_{N}(A)=j_{1}(r_{N})+i j_{1}(s_{N})$ with $r_{N} \text{ and }s_{N} \in \mathbb{R}$. Consider now the element $A-r_{N}G_{N1}-s_{N}G_{N2}$. It still commutes with all the $\{ G_{\sigma} \mid \sigma \in \Sigma \}$ and has a leading term in $\Lambda $ of degree lower than $N$. Thus we have shown by induction with respect to $N$ 
\begin{lemma}
\label{L2.2}
The Lie algebra $\mathcal{C}_{sh}(\mathbb{C})$ is a maximal commutative Lie subalgebra of $\mathcal{SH}(\mathbb{C})$.
\end{lemma}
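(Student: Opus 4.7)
The plan is to show that any $A \in \mathcal{SH}(\mathbb{C})$ commuting with every $G_\sigma$, $\sigma \in \Sigma$, lies in $\mathcal{C}_{sh}(\mathbb{C})$, proceeding by induction on the degree $N$ of $A$. Writing $A$ in the canonical skew-hermitian form (\ref{formSH}) with $d_0(A) \in i\mathcal{D}_1(\mathbb{R})$ and $d_j(A) \in \mathcal{D}_1(\mathbb{C})$ for $j \geqslant 1$, the idea is to peel off one basis element of $\mathcal{C}_{sh}(\mathbb{C})$ at a time, each subtraction strictly lowering the degree, until nothing is left.

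The crucial input comes from commutation with the single element $G_{11} = \Lambda - \Lambda^{-1}$. By Lemma \ref{L1.1}, the highest diagonal of $[A, G_{11}]$ sits in position $N+1$ and equals $(d_N(A) - \Lambda d_N(A) \Lambda^{-1})\,\Lambda^{N+1}$. Its vanishing, combined with (\ref{Ldiag}), forces $a_{(s, s+N)} = a_{(s+1, s+N+1)}$ for all $s \in \mathbb{Z}$, so that $d_N(A) = j_1(c)$ for some $c \in \mathbb{C}$. Decomposing $c = r_N + i s_N$ with $r_N, s_N \in \mathbb{R}$ supplies the scalars needed from $\mathcal{C}_{sh}(\mathbb{C})$. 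In the base case $N = 0$, the same commutator forces $d_0(A) = i j_1(s_0)$ with $s_0 \in \mathbb{R}$ and hence $A = s_0 G_{02}$.

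For the inductive step when $N \geqslant 1$, I form $A' = A - r_N G_{N1} - s_N G_{N2}$. From (\ref{elCsh}) one reads off that the $\Lambda^N$ coefficient of $r_N G_{N1} + s_N G_{N2}$ is $j_1(r_N + i s_N) = j_1(c)$ and, crucially, its $\Lambda^{-N}$ coefficient equals $-j_1(\bar c) = -d_N(A)^*$, matching the $\Lambda^{-N}$ coefficient of $A$ dictated by skew-hermiticity. Hence the top and bottom nonzero diagonals of $A$ cancel simultaneously, $A'$ remains in $\mathcal{SH}(\mathbb{C})$ with degree strictly less than $N$, and still commutes with every $G_\sigma$ because $\mathcal{C}_{sh}(\mathbb{C})$ is commutative. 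The induction hypothesis then places $A' \in \mathcal{C}_{sh}(\mathbb{C})$, whence $A \in \mathcal{C}_{sh}(\mathbb{C})$ as well.

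The only delicate point worth flagging is this simultaneous cancellation of the $\pm N$ diagonals: without it one would have to run extra commutation arguments to control the lower half of $A$. The precise sign conventions in the definition of $G_{N1}$ and $G_{N2}$, together with the rigid form (\ref{formSH}) imposed by skew-hermiticity, make it automatic, so that the single commutator with $G_{11}$ is enough to drive the whole induction.
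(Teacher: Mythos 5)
Your proposal is correct and follows essentially the same route as the paper: commutation with $G_{11}$ forces the leading diagonal $d_N(A)$ to be a scalar multiple $j_1(r_N+is_N)$ of the identity, one subtracts $r_NG_{N1}+s_NG_{N2}$ to strictly lower the degree, and one concludes by induction on $N$ (with the base case $d_0(A)=ij_1(s_0)$ giving $s_0G_{02}$). Your explicit check that the $\Lambda^{-N}$ coefficients also cancel is a detail the paper leaves implicit (it follows there from the subtracted element remaining skew hermitian of lower degree), but it is the same argument.
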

So, $\mathcal{C}_{sh}(\mathbb{C})$ is optimal as for commutativity. 
We call the basis $\{ G_{\sigma} \mid \sigma \in \Sigma \}$ of $\mathcal{C}_{sh}(\mathbb{C})$ the {\it basic directions} of this space. From the multiplication rules for diagonals and the general form of the elements of $\mathcal{SH}(R)$, one sees that the space
\begin{equation*}
P_{-}(\mathbb{R})=\{P=\sum_{j \leqslant 0} d_{j}(P) \Lambda^{j} \mid d_{0}(P) \in \mathcal{D}_{1}(R(\mathbb{R})),  d_{j}(P)  \in \mathcal{D}_{1}(R) \text{ for } j<0 \}
\end{equation*}
is a real Lie subalgebra of $LT_{\mathbb{Z}}(R)$ that complements $\mathcal{SH}(R)$, i.e.
\begin{equation}
\label{Cdirsum}
LT_{\mathbb{Z}}(R)=P_{-}(\mathbb{R}) \oplus \mathcal{SH}(R).
\end{equation}
Let $P=\sum_{j \leqslant N}d_{j}(P)\Lambda^{j}$ be a general element of $LT_{\mathbb{Z}}(R)$. Decompose each  $d_{j}(P)$ as $d_{j}(P)=a_{j}+ib_{j}$ with $a_{j} \text{ and }b_{j} \in \mathcal{D}_{1}(R(\mathbb{R}))$. Then the projection $\pi_{-}$ from $LT_{\mathbb{Z}}(R)$ onto $ P_{-}(\mathbb{R})$ in the decomposition (\ref{Cdirsum}) is given by
\begin{equation*}
\pi_{-}(P)=a_{0}+ \sum_{j<0}d_{j}(P)\Lambda^{j} +\sum_{j<0} \Lambda^{j} d_{-j}(P)^{*}
\end{equation*}
and the projection $\pi_{sh}$  from $LT_{\mathbb{Z}}(R)$ onto the second component in 
the decomposition (\ref{Cdirsum}) is given by
\begin{equation*}
\pi_{sh}(P)=P-\pi_{-}(P)=ib_{0}+ \sum_{j>0}d_{j}(P)\Lambda^{j} -\sum_{j>0} \Lambda^{-j} d_{j}(P)^{*}
\end{equation*}
Next we assign a group to the Lie algebra $ P_{-}(\mathbb{R})$. For, if the exponential map is well defined on $P_{-}(\mathbb{R})$, then the image under $\exp$ of a $P=\sum_{j \leqslant 0}d_{j}(P)\Lambda^{j} \in P_{-}(\mathbb{R})$ is a lower triangular matrix with leading term $\exp (d_{0}(P))$, which is invertible in $\mathcal{D}_{1}(R(\mathbb{R}))$ with inverse $\exp (-d_{0}(P))$ and moreover $\exp (d_{0}(P))>0$, i.e. , if $\exp (d_{0}(P))=$diag($d(s)$), then $d(s) >0$ for all $s \in \mathbb{Z}$. In particular the image of $\exp$ belongs to the group inside $LT_{\mathbb{Z}}(R)$ given by
\begin{equation*}
\mathcal{P}_{-}(\mathbb{R})=\{G=\sum_{j \leqslant 0} d_{j}(G) \Lambda^{j} \in P_{-}(\mathbb{R}) \mid d_{0}(G) \in \mathcal{D}_{1}(R(\mathbb{R}))^{*} \text{ and } d_{0}(G) >0\}
\end{equation*}
Therefore we see $\mathcal{P}_{-}(\mathbb{R})$ as the group associated with the Lie algebra $P_{-}(\mathbb{R})$. 
In the sequel we will look at deformations of $\mathcal{C}_{sh}(\mathbb{C})$ by conjugating with elements from $\mathcal{P}_{-}(\mathbb{R})$.  Therefore we introduce now the following notion:

\begin{definition}
\label{D2.1}
A $\mathcal{P}_{-}(\mathbb{R})$-deformation of the $\{ G_{\sigma} \mid \sigma \in \Sigma \}$ is a collection of matrices $\{ \mathcal{G}_{\sigma} \mid \sigma \in \Sigma \}$ 
in $LT_{\mathbb{Z}}(R)$ such that for all $\sigma \in \Sigma$,
$
\mathcal{G}_{\sigma}=gG_{\sigma}g^{-1},
$
for some $g \in \mathcal{P}_{-}(\mathbb{R})$. We call $g$ the {\it dressing operator} of the deformation.
\end{definition}
Since the element $G_{02}$ remains the same at this type of deformations, it suffices to focus on the deformation of the remaining basic directions $\{ G_{\sigma} \mid \sigma \in \Sigma_{0} \}$ of $\mathcal{C}_{sh}(\mathbb{C})$ , where $\Sigma_{0}=\Sigma_{1} \cup  \{ \sigma=j2 \mid j \geqslant 1\}$.

\begin{remark}
\label{R2.0}
There is a real analogue of the deformation described above. It takes place in $LT_{\mathbb{Z}}(R_{as})$, where $R_{as}$ is a commutative algebra of real-valued functions.
Now we consider the real Lie subalgebra $\mathcal{AS}(R_{as})$ of antisymmetric matrices in $LT_{\mathbb{Z}}(R_{as})$, i.e.
$$
\mathcal{AS}(R_{as})=\{ A \mid A \in LT_{\mathbb{Z}}(R_{as}), A^{T}=-A \}.
$$
A general element  $A$ in $\mathcal{AS}(R_{as})$ has the form
\begin{equation}
\label{formAS}
A=\sum_{j=1}^{N} d_{j}(A)\Lambda^{j} - \sum_{j=1}^{N} \Lambda^{-j}d_{j}(A), 
\end{equation}
$\text{with all }d_{j}(A) \in \mathcal{D}_{1}(R_{as})$.
Inside  
$\mathcal{AS}(\mathbb{R})$ we consider the real Lie subalgebra $\mathcal{C}_{as}(\mathbb{R})$ spanned by the elements
\begin{equation}
\label{elCas}
F_{j}=\Lambda^{j}-\Lambda^{-j}, j \geqslant 1.
\end{equation}
We write $\mathcal{J}=\{j \in \mathbb{N} \mid j \geqslant 1 \}$.
Similarly to $\mathcal{C}_{sh}(\mathbb{C})$, the Lie algebra $\mathcal{C}_{as}(\mathbb{R})$ is a maximal commutative Lie subalgebra of $\mathcal{AS}(\mathbb{R})$.
We call the basis $\{ F_{j} \mid j \in \mathcal{J} \}$ also the {\it basic directions}  of $\mathcal{C}_{as}(\mathbb{R})$. From (\ref{formAS})
one sees directly that the space of lower triangular matrices in $LT_{\mathbb{Z}}(R_{as})$
\begin{equation*}
P_{\leqslant 0}=\{P=\sum_{j \leqslant 0} d_{j}(P) \Lambda^{j} \mid 
d_{j}(P)  \in \mathcal{D}_{1}(R_{as}) \text{ for all } j \leqslant 0 \}
\end{equation*}
is a real Lie subalgebra of $LT_{\mathbb{Z}}(R_{as})$ that complements $\mathcal{AS}(R_{as})$, i.e.
\begin{equation}
\label{Casdirsum}
LT_{\mathbb{Z}}(R_{as})=P_{\leqslant 0} \oplus \mathcal{AS}(R_{as}).
\end{equation}
Let $P=\sum_{j \leqslant N}d_{j}(P)\Lambda^{j}$ be a general element of $LT_{\mathbb{Z}}(R_{as})$. 
Then the projection $\pi_{\leqslant 0}$ from $LT_{\mathbb{Z}}(R_{as})$ onto $ P_{\leqslant 0}$ in the decomposition (\ref{Casdirsum}) is given by
\begin{equation*}
\pi_{\leqslant 0}(P)= \sum_{j \leqslant 0}d_{j}(P)\Lambda^{j} +\sum_{j<0} \Lambda^{j} d_{-j}(P)
\end{equation*}
and the projection $\pi_{as}$  from $LT_{\mathbb{Z}}(R)$ onto the second component in 
the decomposition (\ref{Casdirsum}) is given by
\begin{equation*}
\pi_{as}(P)=P-\pi_{\leqslant 0}(P)= \sum_{j>0}d_{j}(P)\Lambda^{j} -\sum_{j>0} \Lambda^{-j} d_{j}(P)
\end{equation*}
Inside $LT_{\mathbb{Z}}(R_{as})$ we have the group 
\begin{equation*}
\mathcal{P}_{\leqslant 0}=\{P=\sum_{j \leqslant 0} d_{j}(P) \Lambda^{j} \in P_{-} \mid d_{0}(P) \in \mathcal{D}_{1}(R_{as})^{*} \text{ and } d_{0}(P) >0\}
\end{equation*}
that we see as the group associated with the Lie algebra $P_{\leqslant 0}$ for a similar reason as for $\mathcal{P}_{-}(\mathbb{R})$.  
Our interest is again in deformations of $\mathcal{C}_{as}(\mathbb{R})$ by conjugating with elements from $\mathcal{P}_{\leqslant 0}$ and we call a set of matrices $\{ \mathcal{F}_{j} \mid j \in \mathcal{J} \}$ 
in $LT_{\mathbb{Z}}(R)$ a $\mathcal{P}_{\leqslant 0}$-{\it deformation} of the $\{ F_{j} \mid j \in \mathcal{J} \}$, if there is a $g \in \mathcal{P}_{\leqslant 0}$, such that for all $j \in \mathcal{J} $,
$
\mathcal{F}_{j}=gF_{j}g^{-1}.
$
The element $g$ is called again the {\it dressing operator} of the deformation.
\end{remark}

\section{The integrable hierarchies}
\label{sec3}

In this section we discuss first the evolution equations that we require of a $\mathcal{P}_{-}(\mathbb{R})$-deformation of the basis $\{ G_{\sigma} \mid \sigma \in \Sigma \}$ of $\mathcal{C}_{sh}(\mathbb{C})$.
 We need then that $R(\mathbb{R})$ is equipped with a set of commuting $\mathbb{R}$-linear derivations $\{ \partial_{\sigma} \mid \sigma \in \Sigma \}$. Each $\partial_{\sigma}$ is an algebraic substitute for differentiation with respect to the flow parameter of the flow corresponding to $G_{\sigma}$. We extend all $\{\partial_{\sigma} \}$ to $\mathbb{C}$-linear derivations of $R$ by putting $\partial_{\sigma}(\alpha f):=\alpha \partial_{\sigma}(f)$ for all $\alpha \in \mathbb{C} $ and $f \in R(\mathbb{R})$. By letting each $\partial_{\sigma}$ act on the matrix coefficients of an element of $LT_{\mathbb{Z}}(R)$ one gets a $\mathbb{C}$-linear derivation of this algebra which is denoted by the same symbol. The evolution equations we require of a $\mathcal{P}_{-}(\mathbb{R})$-deformation $\{ \mathcal{G}_{\sigma} \mid \sigma \in \Sigma \}$ of the basic directions of $\mathcal{C}_{sh}(\mathbb{C})$, are determined by the decomposition (\ref{Cdirsum}) and consist of all equations
\begin{equation}
\label{Laxsh}
\partial_{\sigma_{1}}(\mathcal{G}_{\sigma_{2}} )=[\pi_{sh}(\mathcal{G}_{\sigma_{1}} ),\mathcal{G}_{\sigma_{2}} ]=[\mathcal{G}_{\sigma_{2}} ,\pi_{-}(\mathcal{G}_{\sigma_{1}} )]
\end{equation}
for all $\sigma_{1}$ and $\sigma_{2}$ from $\Sigma$. The second equality in (\ref{Laxsh}) is a direct consequence of the fact that all $\mathcal{G}_{\sigma_{1}}$ and $\mathcal{G}_{\sigma_{2}}$ commute. The data $(R, \{ \partial_{\sigma}\})$ we call a {\it setting} in which we can consider these deformations and their evolution equations. A $\mathcal{P}_{-}(\mathbb{R})$-deformation of the $\{ G_{\sigma} \mid \sigma \in \Sigma \}$ that satisfies all the equations (\ref{Laxsh}), is called a {\it solution} in the setting $(R, \{\partial_{\sigma}\})$ of the $\mathcal{C}_{sh}(\mathbb{C})$-hierarchy after the commutative Lie algebra that gets deformed.
The equations (\ref{Laxsh}) are called the {\it Lax equations} of the $\mathcal{C}_{sh}(\mathbb{C})$-hierarchy. These equations always have the trivial solution  $\{ \mathcal{G}_{\sigma}=G_{\sigma} \mid \sigma \in \Sigma \}$. Since at each $\mathcal{P}_{-}(\mathbb{R})$-deformation $\{ \mathcal{G}_{\sigma}\}$ there holds $\mathcal{G}_{02}=G_{02}=i\Id$, the derivation $\partial_{02}$ is zero on all $\{ \mathcal{G}_{\sigma}\}$ and it suffices to prove the equations  (\ref{Laxsh})
for all $\sigma_{1}$ and $\sigma_{2}$ from $\Sigma_{0}=\{\sigma \in \Sigma \mid \sigma \neq 02 \}$.

\begin{remark}
\label{R3.0} 
For the $\mathcal{P}_{\leqslant 0}$-deformation of the basic directions of $\mathcal{C}_{as}(\mathbb{R})$ from Remark (\ref{R2.0}) there is also a natural set of evolution equations that one can consider.
We need now that $R_{as}$ is equipped with a set of commuting $\mathbb{R}$-linear derivations $\{ \partial_{j} \mid j \in \mathcal{J} \}$. Each $\partial_{j}$ is an algebraic substitute for differentiation with respect to the flow parameter of the flow corresponding to $F_{j}$. 
By letting each $\partial_{j}$ act on the matrix coefficients of an element of $LT_{\mathbb{Z}}(R_{as})$ one gets a $\mathbb{R}$-linear derivation of this algebra which is also denoted by $\partial_{j}$. The evolution equations we require of a $\mathcal{P}_{\leqslant 0}$-deformation $\{ \mathcal{F}_{\sigma} \mid \sigma \in \Sigma \}$ of the basic directions of $\mathcal{C}_{as}(\mathbb{R})$, are determined by the decomposition (\ref{Casdirsum}) and consist of all equations
\begin{equation}
\label{Laxas}
\partial_{j_{1}}(\mathcal{F}_{j_{2}} )=[\pi_{as}(\mathcal{F}_{j_{1}} ),\mathcal{F}_{j_{2}} ]=[\mathcal{F}_{j_{2}} ,\pi_{\leqslant 0}(\mathcal{F}_{j_{1}} )]
\end{equation}
for all $j_{1}$ and $j_{2}$ from $\mathcal{J}$. The second equality in (\ref{Laxas}) follows again directly from
the fact that all $\mathcal{F}_{j_{1}}$ and $\mathcal{F}_{j_{2}}$ commute.  Also the data $(R_{as}, \{ \partial_{j}\})$ we call a {\it setting} in which we can consider the present deformations and their evolution equations. 
A $\mathcal{P}_{\leqslant 0}$-deformation of the $\{ F_{j} \mid j \in \mathcal{J} \}$ that satisfies all the equations (\ref{Laxas}), is called a {\it solution} in the setting $(R_{as}, \{\partial_{j}\})$ of the $\mathcal{C}_{as}(\mathbb{R})$-hierarchy after the commutative Lie algebra that gets deformed.
The equations (\ref{Laxas}) are called the {\it Lax equations} of the $\mathcal{C}_{as}(\mathbb{R})$-hierarchy. Also these equations always have a trivial solution  $\{ \mathcal{F}_{j}=F_{j} \mid j \in \mathcal{J} \}$.
\end{remark}
\begin{example}
\label{E3.2}
To give an idea of what to expect of Lax equations like (\ref{Laxsh}) and (\ref{Laxas}) we present here an example of a simple system that can be put in this form. Recall
that the so-called {\it infinite Toda chain}  consists of 
an infinite number of
particles on a straight line labeled by $\mathbb{Z}$. We assume for simplicity that they all have the same mass equal to one
and that their equations of motion 
are given by  
\begin{equation}
\label{1.1}
\frac{dq_n}{dt}=p_n \;\;\mbox{and}
\;\; \frac{dp_n}{dt}=2e^{2(q_{n-1}-q_{n})}-2e^{2(q_{n}-q_{n+1})},\;\;n  \in
\mathbb{Z}.
\end{equation}
Here $q_n$ is the displacement of the $n$-th particle, $p_{n}$  its momentum
and the two exponential factors in equation (\ref{1.1}) describe the forces exerted on the $n$-th particle by each of its neighbors. 
These equations can be
rewritten as an equality between $\mathbb{Z} \times \mathbb{Z}$-matrices.
Thereto we put
\begin{equation*}\notag
a_n:=
e^{q_n-q_{n+1}} .
\end{equation*}
The equations (\ref{1.1}) get then the following form
\begin{equation}\label{1.2}
\frac{da_n}{dt}=a_n(p_n-p_{n+1}) 
\;\;\mbox{and}\;\;\frac{dp_n}{dt}=2(a_{n-1}^2-a_n^2),\;\;n \in \mathbb{Z}.
\end{equation}
Consider now the 
$\mathbb{Z} \times \mathbb{Z}$-matrices $L$ and $M$ of the form
\begin{equation*}\notag
L= 
\left( 
\begin{matrix}
 \ddots & \ddots & \ddots & &0\\
\ddots & p_{n-1}&a_{n-1}&0& \ddots \\
\ddots &a_{n-1}& p_{n}&a_{n}& \ddots\\
&0&a_{n}& p_{n+1}& \ddots\\
0& &\ddots & \ddots & \ddots
\end{matrix}
\right)
\;\;\mbox{and}\;\;
M=\left( 
\begin{matrix}
 \ddots & \ddots & \ddots & &0\\
\ddots & 0&a_{n-1}&0& \ddots \\
\ddots &-a_{n-1}& 0&a_{n}& \ddots\\
&0&-a_{n}& 0& \ddots\\
0& &\ddots & \ddots &  \ddots
\end{matrix}
\right).
\end{equation*}
The matrices $L$ and $M$ decompose as follows in diagonal matrices and powers of $\Lambda$:
\begin{align}
\label{decoL1}
L&={\rm diag}(a_{n}) \Lambda +{\rm diag}(p_{n}) \Id +\Lambda^{-1} {\rm diag}(a_{n}) \\ \notag
&=M +{\rm diag}(p_{n}) \Id +\Lambda^{-1} {\rm diag}(2a_{n}),
\end{align}
where $M={\rm diag}(a_{n}) \Lambda - \Lambda^{-1}{\rm diag}(a_{n})=\pi_{as}(L)$. 
Since all $a_{n}>0$ the diagonal matrix ${\rm diag}(a_{n})$ belongs to $\mathcal{D}_{1}(R)^{*}$ and it was shown in \cite{HPP1} that any matrix in $LT_{\mathbb{Z}}(R)$ of degree one with an invertible leading diagonal component can be obtained by dressing $\Lambda$ with an element of $\mathcal{P}_{\leqslant 0}$. This holds also for $F_{11}$. Hence $L$ can be obtained by dressing $F_{11}$ with an element of $\mathcal{P}_{\leqslant 0}$.
One can express the equations (\ref{1.2}) in terms of relations for the diagonal components before the different powers of $\Lambda$ in $L$ in (\ref{decoL1}). Thus we get:
\begin{align*}
\frac{d}{dt}({\rm diag}(a_{n}))&={\rm diag}(a_{n})({\rm diag}(p_{n})- \Lambda {\rm diag}(p_{n}) \Lambda^{-1})\\
&={\rm diag}(a_{n})({\rm diag}(p_{n})- {\rm diag}(p_{n+1}) )\\
\frac{d}{dt}({\rm diag}(p_{n}))&=2(\Lambda^{-1} {\rm diag}(a_{n}^{2}) \Lambda -{\rm diag}(a_{n}^{2}))\\
&=2({\rm diag}(a_{n-1}^{2})-{\rm diag}(a_{n}^{2}))
\end{align*}
and comparing these formulas with the expression of 
$$
[L,M]=[{\rm diag}(p_{n}) \Id +\Lambda^{-1} {\rm diag}(2a_{n}), {\rm diag}(a_{n}) \Lambda - \Lambda^{-1}{\rm diag}(a_{n})]$$ 
in diagonal matrices and powers of $\Lambda$ yields that 
the following Lax equation for $L$:
\begin{equation}
\label{inftoda}
\frac{dL}{dt}=[L,M],
\end{equation}
is equivalent with the equations (\ref{1.2}) of the infinite Toda chain.
\end{example}

The solutions to both hierarchies possess still another useful property.
There holds namely
\begin{proposition}
\label{P2.1}
Both the Lax equations (\ref{Laxsh}) of the $\mathcal{C}_{sh}(\mathbb{C})$-hierarchy and those (\ref{Laxas}) of the $\mathcal{C}_{as}(\mathbb{R})$-hierarchy are so-called compatible systems, i.e. the projections 
$\{ \mathcal{B}_{\sigma}:=\pi_{sh}(\mathcal{G}_{\sigma}) \mid \sigma \in \Sigma\}$ of a solution $\{\mathcal{G}_{\sigma} \}$  of the $\mathcal{C}_{sh}(\mathbb{C})$-hierarchy satisfy the zero curvature relations
\begin{equation}
\label{ZCSH}
\partial_{\sigma_{1}}(\mathcal{B}_{\sigma_{2}}
)-\partial_{\sigma_{2}}(\mathcal{B}_{\sigma_{1}})-[\mathcal{B}_{\sigma_{1}},\mathcal{B}_{\sigma_{2}}]=0
\end{equation}
and the projections $\{ \mathcal{C}_{j}:=\pi_{as}(\mathcal{F}_{j}) \mid j \in \mathcal{J} \}$ of a solution $\{\mathcal{F}_{j} \}$  of the $\mathcal{C}_{as}(\mathbb{R})$-hierarchy satisfy the zero curvature relations
\begin{equation}
\label{ZCAS}
\partial_{j_{1}}( \mathcal{C}_{j_{2}})-\partial_{j_{2}}( \mathcal{C}_{j_{1}})-[ \mathcal{C}_{j_{1}}
, \mathcal{C}_{j_{2}}]=0.
\end{equation}
\end{proposition}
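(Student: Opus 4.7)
The plan is to project both sides of the Lax equations (\ref{Laxsh}) through $\pi_{sh}$, making essential use of three structural ingredients: that $\pi_{sh}$, being a diagonal-wise operation, commutes with every derivation $\partial_{\sigma}$; that the decomposition (\ref{Cdirsum}) is a direct sum of \emph{Lie} subalgebras; and that $[\mathcal{G}_{\sigma_1},\mathcal{G}_{\sigma_2}]=0$ because the $\mathcal{G}_{\sigma}$ are simultaneous conjugates of the commuting generators $\{G_{\sigma}\}$ of $\mathcal{C}_{sh}(\mathbb{C})$. Throughout I write $\mathcal{B}_{\sigma}=\pi_{sh}(\mathcal{G}_{\sigma})$ and $\mathcal{L}_{\sigma}=\pi_{-}(\mathcal{G}_{\sigma})$, so that $\mathcal{G}_{\sigma}=\mathcal{B}_{\sigma}+\mathcal{L}_{\sigma}$.

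First I would apply $\pi_{sh}$ to the Lax equation $\partial_{\sigma_1}(\mathcal{G}_{\sigma_2})=[\mathcal{B}_{\sigma_1},\mathcal{G}_{\sigma_2}]$. The left-hand side becomes $\partial_{\sigma_1}\mathcal{B}_{\sigma_2}$, and expanding $\mathcal{G}_{\sigma_2}=\mathcal{B}_{\sigma_2}+\mathcal{L}_{\sigma_2}$ on the right, together with $[\mathcal{B}_{\sigma_1},\mathcal{B}_{\sigma_2}]\in\mathcal{SH}(R)$, gives
\begin{equation*}
\partial_{\sigma_1}\mathcal{B}_{\sigma_2}=[\mathcal{B}_{\sigma_1},\mathcal{B}_{\sigma_2}]+\pi_{sh}\bigl([\mathcal{B}_{\sigma_1},\mathcal{L}_{\sigma_2}]\bigr).
\end{equation*}
Swapping $\sigma_1$ and $\sigma_2$ and subtracting yields
\begin{equation*}
\partial_{\sigma_1}\mathcal{B}_{\sigma_2}-\partial_{\sigma_2}\mathcal{B}_{\sigma_1}=2[\mathcal{B}_{\sigma_1},\mathcal{B}_{\sigma_2}]+\pi_{sh}\bigl([\mathcal{B}_{\sigma_1},\mathcal{L}_{\sigma_2}]+[\mathcal{L}_{\sigma_1},\mathcal{B}_{\sigma_2}]\bigr).
\end{equation*}

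The next step is to feed in the commutativity $0=[\mathcal{G}_{\sigma_1},\mathcal{G}_{\sigma_2}]$. Expanding each factor as $\mathcal{B}+\mathcal{L}$ gives
\begin{equation*}
[\mathcal{B}_{\sigma_1},\mathcal{L}_{\sigma_2}]+[\mathcal{L}_{\sigma_1},\mathcal{B}_{\sigma_2}]=-[\mathcal{B}_{\sigma_1},\mathcal{B}_{\sigma_2}]-[\mathcal{L}_{\sigma_1},\mathcal{L}_{\sigma_2}],
\end{equation*}
and applying $\pi_{sh}$ annihilates the last summand because $P_{-}(\mathbb{R})$ is a Lie subalgebra. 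Substituting back collapses the right-hand side of the previous display to $[\mathcal{B}_{\sigma_1},\mathcal{B}_{\sigma_2}]$, which is exactly (\ref{ZCSH}).

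The $\mathcal{C}_{as}(\mathbb{R})$-version (\ref{ZCAS}) is proved in word-for-word the same way, projecting (\ref{Laxas}) through $\pi_{as}$ in the splitting (\ref{Casdirsum}) and using that both $P_{\leqslant 0}$ and $\mathcal{AS}(R_{as})$ are Lie subalgebras. There is no real obstacle to speak of; the only things to check behind the scenes are that the projections $\pi_{sh},\pi_{-}$ (respectively $\pi_{as},\pi_{\leqslant 0}$) commute with the derivations, which is immediate from their diagonal-wise definitions, and that the two summands of each splitting are closed under the bracket, which follows at once from Lemma \ref{L1.1} together with the explicit forms (\ref{formSH}) and (\ref{formAS}).
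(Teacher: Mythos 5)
Your proof is correct, and it reaches (\ref{ZCSH}) by a route that is organized differently from the paper's even though it rests on the same three ingredients: the Lax equations (\ref{Laxsh}), the commutativity $[\mathcal{G}_{\sigma_{1}},\mathcal{G}_{\sigma_{2}}]=0$, and the fact that both summands of (\ref{Cdirsum}) are Lie subalgebras. The paper never projects the Lax equation; it instead substitutes $\mathcal{B}_{\sigma_{k}}=\mathcal{G}_{\sigma_{k}}-\pi_{-}(\mathcal{G}_{\sigma_{k}})$ into the zero-curvature expression and rewrites it, using the second equality in (\ref{Laxsh}), as $-\partial_{\sigma_{1}}(\pi_{-}(\mathcal{G}_{\sigma_{2}}))+\partial_{\sigma_{2}}(\pi_{-}(\mathcal{G}_{\sigma_{1}}))-[\pi_{-}(\mathcal{G}_{\sigma_{1}}),\pi_{-}(\mathcal{G}_{\sigma_{2}})]$, an element of $P_{-}(\mathbb{R})$; since the expression visibly also lies in $\mathcal{SH}(R)$ and the sum (\ref{Cdirsum}) is direct, it must vanish. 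You instead apply $\pi_{sh}$ to the Lax equation, antisymmetrize in $\sigma_{1},\sigma_{2}$, and use the expansion of $0=[\mathcal{G}_{\sigma_{1}},\mathcal{G}_{\sigma_{2}}]$ to evaluate the cross terms $\pi_{sh}([\mathcal{B}_{\sigma_{1}},\mathcal{L}_{\sigma_{2}}]+[\mathcal{L}_{\sigma_{1}},\mathcal{B}_{\sigma_{2}}])=-[\mathcal{B}_{\sigma_{1}},\mathcal{B}_{\sigma_{2}}]$. The two computations are essentially dual: yours produces the relation directly and makes explicit where the factor $2[\mathcal{B}_{\sigma_{1}},\mathcal{B}_{\sigma_{2}}]$ gets cancelled down to $[\mathcal{B}_{\sigma_{1}},\mathcal{B}_{\sigma_{2}}]$, while the paper's makes the underlying mechanism (membership in two complementary subalgebras) more visible. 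Both versions need the observation, which you correctly flag and which the paper uses only implicitly, that $\pi_{sh}$ and $\pi_{-}$ commute with every $\partial_{\sigma}$; this holds because each $\partial_{\sigma}$ is the $\mathbb{C}$-linear extension of a derivation of $R(\mathbb{R})$ and hence commutes with complex conjugation and preserves the real--imaginary splitting of the diagonal coefficients. Your closing remarks about the real case and about closure of the two summands under the bracket are likewise accurate.
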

\begin{proof}
The idea is to show that the left hand side of (\ref{ZCSH}) resp. (\ref{ZCAS}) belongs to 
$$
\pi_{sh}(LT_{\mathbb{Z}}(R)) \cap \pi_{-}(LT_{\mathbb{Z}}(R)) 
\text{ resp. }
\pi_{as}(LT_{\mathbb{Z}}(R)) \cap \pi_{\leqslant 0}(LT_{\mathbb{Z}}(R)) 
$$ 
and thus has to be zero. We give the proof for the $\{ \mathcal{B}_{\sigma} \}$, that for the $\{ \mathcal{C}_{j} \}$ is similar and is left to the reader. The inclusion in the first factor is clear as both $\mathcal{B}_{\sigma}$ and $\partial_{\sigma_{1}}(\mathcal{B}_{\sigma_{2}})$ belong to the Lie subalgebra $\pi_{sh}(LT_{\mathbb{Z}}(R)).$ 
To show the other one, we use the Lax equations (\ref{Laxsh}). 
By substituting $\mathcal{B}_{\sigma_{k}}=\mathcal{G}_{\sigma_{k}}-\pi_{-}(\mathcal{G}_{\sigma_{k}}), k=1,2,$
we get for
\begin{align*}
\partial_{\sigma_{1}}( \mathcal{B}_{\sigma_{2}})-\partial_{\sigma_{2}}( \mathcal{B}_{\sigma_{1}})=&\;\partial_{\sigma_{1}}(\mathcal{G}_{\sigma_{2}})-\partial_{\sigma_{1}}(\pi_{-}(\mathcal{G}_{\sigma_{2}}))\\
 &\; -\partial_{\sigma_{2}}(\mathcal{G}_{\sigma_{1}})+\partial_{\sigma_{2}}(\pi_{-}(\mathcal{G}_{\sigma_{1}}))\\
=&\; [\mathcal{B}_{\sigma_{1}},\mathcal{G}_{\sigma_{2}}]-[\mathcal{B}_{\sigma_{2}},\mathcal{G}_{\sigma_{1}}]\\
& \;-\partial_{\sigma_{1}}(\pi_{-}(\mathcal{G}_{\sigma_{2}}))+\partial_{\sigma_{2}}(\pi_{-}(\mathcal{G}_{\sigma_{1}}\mathcal{G}_{\sigma_{k}}))
\end{align*} 
and for 
\begin{align*}
[ \mathcal{B}_{\sigma_{1}}, \mathcal{B}_{\sigma_{2}}]=&\;[\mathcal{G}_{\sigma_{1}}-\pi_{-}(\mathcal{G}_{\sigma_{1}}), \mathcal{G}_{\sigma_{2}}-\pi_{-}(\mathcal{G}_{\sigma_{2}})]\\
=&\;-[\pi_{-}(\mathcal{G}_{\sigma_{1}}), \mathcal{G}_{\sigma_{2}}]+[\pi_{-}(\mathcal{G}_{\sigma_{2}}), \mathcal{G}_{\sigma_{1}}]\\
&\;+[\pi_{-}(\mathcal{G}_{\sigma_{1}}),\pi_{-}(\mathcal{G}_{\sigma_{2}})].
\end{align*}
Taking into account the second identity in (\ref{Laxsh}), we see that the left hand side of (\ref{ZCSH}) is equal to 
$$
-\partial_{\sigma_{1}}(\pi_{-}(\mathcal{G}_{\sigma_{2}}))+\partial_{\sigma_{2}}(\pi_{-}(\mathcal{G}_{\sigma_{1}}))-[\pi_{-}(\mathcal{G}_{\sigma_{1}}),\pi_{-}(\mathcal{G}_{\sigma_{2}})].
$$
This element belongs to the Lie subalgebra $\pi_{-}(LT_{\mathbb{Z}}(R)) $
and that proves the claim.
\end{proof}

Besides the zero curvature relations for the projections $\{ \mathcal{B}_{\sigma} \}$ resp. $\{ \mathcal{C}_{j} \}$ corresponding to respectively a solution $\{\mathcal{G}_{\sigma} \}$  of the $\mathcal{C}_{sh}(\mathbb{C})$-hierarchy and a solution $\{\mathcal{F}_{j} \}$  of the $\mathcal{C}_{as}(\mathbb{R})$-hierarchy, also other parts satisfy such relations. Introduce for any $\mathcal{P}_{-}(\mathbb{R})$-deformation $\{\mathcal{G}_{\sigma} \}$ of the $\{G_{\sigma} \}$
and any  $\mathcal{P}_{\leqslant 0}$-deformation $\{\mathcal{F}_{j} \}$ of the $\{F_{j} \}$
in $LT_{\mathbb{Z}}(R)$ the notations
\begin{equation}
\label{co's}
\mathcal{A}_{\sigma}:=\mathcal{B}_{\sigma}-\mathcal{G}_{\sigma}=-\pi_{-}(\mathcal{G}_{\sigma} ) , \sigma \in \Sigma, \text{ and }\mathcal{D}_{j}:=\mathcal{C}_{j}-\mathcal{F}_{j}=-\pi_{\leqslant 0}(\mathcal{F}_{j}), j \in \mathcal{J}.
\end{equation}
Then we can say

\begin{corollary}The following relations hold:
\label{C1.1}
\begin{itemize}
\item The parts $\{ \mathcal{A}_{\sigma} \mid \sigma \in \Sigma\}$ of a solution $\{\mathcal{G}_{\sigma} \}$ of the $\mathcal{C}_{sh}(\mathbb{C})$-hierarchy satisfy 
$$
\partial_{\sigma_{1}}(\mathcal{A}_{\sigma_{2}})-\partial_{\sigma_{2}}(\mathcal{A}_{\sigma_{1}}
)-[\mathcal{A}_{\sigma_{1}},\mathcal{A}_{\sigma_{2}}]=0.
$$
\item
The parts $\{ \mathcal{D}_{j} \mid j \in \mathcal{J}\}$ of a solution $\{\mathcal{F}_{j} \}$ of the $\mathcal{C}_{as}(\mathbb{R})$-hierarchy satisfy 
$$
\partial_{j_{1}}(\mathcal{D}_{j_{2}})-\partial_{j_{2}}(\mathcal{D}_{j_{1}})-[\mathcal{D}_{j_{1}},\mathcal{D}_{j_{2}}]=0
$$
\end{itemize}
\end{corollary}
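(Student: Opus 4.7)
The strategy is to exploit the identity $\mathcal{A}_\sigma = \mathcal{B}_\sigma - \mathcal{G}_\sigma$ (from the definition in (\ref{co's})) together with two facts already available: the zero curvature relations for the $\{\mathcal{B}_\sigma\}$ established in Proposition \ref{P2.1}, and the commutativity $[\mathcal{G}_{\sigma_1},\mathcal{G}_{\sigma_2}]=0$, which holds since the $\mathcal{G}_\sigma$ are all conjugates by the same element $g \in \mathcal{P}_{-}(\mathbb{R})$ of the pairwise commuting basic directions $G_\sigma$. I will prove the complex case and leave the (formally identical) real case to the reader, as the paper does throughout.

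Substituting $\mathcal{A}_{\sigma_k}=\mathcal{B}_{\sigma_k}-\mathcal{G}_{\sigma_k}$ into the left hand side of the desired relation, the plan is to group terms so that the $\mathcal{B}$-only contribution vanishes by (\ref{ZCSH}) and the mixed and $\mathcal{G}$-only contributions cancel using the Lax equations (\ref{Laxsh}). Expanding the bracket gives
\begin{equation*}
[\mathcal{A}_{\sigma_1},\mathcal{A}_{\sigma_2}]=[\mathcal{B}_{\sigma_1},\mathcal{B}_{\sigma_2}]-[\mathcal{B}_{\sigma_1},\mathcal{G}_{\sigma_2}]-[\mathcal{G}_{\sigma_1},\mathcal{B}_{\sigma_2}]+[\mathcal{G}_{\sigma_1},\mathcal{G}_{\sigma_2}],
\end{equation*}
the last term being zero by commutativity. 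The Lax equations read $\partial_{\sigma_1}(\mathcal{G}_{\sigma_2})=[\mathcal{B}_{\sigma_1},\mathcal{G}_{\sigma_2}]$ and $\partial_{\sigma_2}(\mathcal{G}_{\sigma_1})=[\mathcal{B}_{\sigma_2},\mathcal{G}_{\sigma_1}]=-[\mathcal{G}_{\sigma_1},\mathcal{B}_{\sigma_2}]$, so the contribution of the four $\mathcal{G}$-involving terms to $\partial_{\sigma_1}(\mathcal{A}_{\sigma_2})-\partial_{\sigma_2}(\mathcal{A}_{\sigma_1})-[\mathcal{A}_{\sigma_1},\mathcal{A}_{\sigma_2}]$ collapses to zero by direct cancellation. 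What remains is exactly $\partial_{\sigma_1}(\mathcal{B}_{\sigma_2})-\partial_{\sigma_2}(\mathcal{B}_{\sigma_1})-[\mathcal{B}_{\sigma_1},\mathcal{B}_{\sigma_2}]$, which vanishes by Proposition \ref{P2.1}.

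Alternatively, and perhaps more elegantly, one can observe that the calculation already performed inside the proof of Proposition \ref{P2.1} shows that the left hand side of (\ref{ZCSH}) equals
\begin{equation*}
-\partial_{\sigma_1}(\pi_{-}(\mathcal{G}_{\sigma_2}))+\partial_{\sigma_2}(\pi_{-}(\mathcal{G}_{\sigma_1}))-[\pi_{-}(\mathcal{G}_{\sigma_1}),\pi_{-}(\mathcal{G}_{\sigma_2})],
\end{equation*}
and, via the substitution $\mathcal{A}_\sigma=-\pi_{-}(\mathcal{G}_\sigma)$, this expression is precisely $\partial_{\sigma_1}(\mathcal{A}_{\sigma_2})-\partial_{\sigma_2}(\mathcal{A}_{\sigma_1})-[\mathcal{A}_{\sigma_1},\mathcal{A}_{\sigma_2}]$. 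Thus the zero curvature relation for the $\{\mathcal{A}_\sigma\}$ is immediate from the vanishing already proven in Proposition \ref{P2.1}; no genuinely new work is needed.

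There is no real obstacle here: the only thing one must be careful about is the sign bookkeeping in passing between $\mathcal{A}_\sigma=\mathcal{B}_\sigma-\mathcal{G}_\sigma=-\pi_{-}(\mathcal{G}_\sigma)$ and correctly using the second equality in (\ref{Laxsh}). The real-case analogue uses the identical argument with $\mathcal{D}_j=-\pi_{\leqslant 0}(\mathcal{F}_j)$, Lax equations (\ref{Laxas}), and the zero curvature relation (\ref{ZCAS}) in place of their complex counterparts.
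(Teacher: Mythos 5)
Your proposal is correct and follows essentially the same route as the paper, which likewise proves the corollary by trading $\mathcal{A}_{\sigma}$ for $\mathcal{B}_{\sigma}-\mathcal{G}_{\sigma}$ in the zero curvature relations of Proposition \ref{P2.1}, invoking the second equality in the Lax equations (\ref{Laxsh}) and the commutativity of the $\{\mathcal{G}_{\sigma}\}$. Your alternative observation, that the displayed expression already reached inside the proof of Proposition \ref{P2.1} is exactly $\partial_{\sigma_{1}}(\mathcal{A}_{\sigma_{2}})-\partial_{\sigma_{2}}(\mathcal{A}_{\sigma_{1}})-[\mathcal{A}_{\sigma_{1}},\mathcal{A}_{\sigma_{2}}]$ under $\mathcal{A}_{\sigma}=-\pi_{-}(\mathcal{G}_{\sigma})$, is also valid and makes the corollary immediate, but it is the same computation packaged differently.
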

\begin{proof}
Again we show the result only for the $\mathcal{C}_{sh}(\mathbb{C})$-hierarchy. 
Now we substitute in the zero curvature relations for the $\{ \mathcal{B}_{\sigma}\}$ everywhere the relation $\mathcal{B}_{\sigma}=\mathcal{A}_{\sigma}+\mathcal{G}_{\sigma}$, use the second equality in the Lax equations (\ref{Laxsh}) and the fact that all the $\{\mathcal{G}_{\sigma}\}$ commute. This gives the desired result.
\end{proof}

Let $(R,\{\partial_{\sigma} \})$ 
denote a setting for the $\mathcal{C}_{sh}(\mathbb{C})$-hierarchy and we write $(R_{as},\{\partial_{j} \})$ for a setting of the $\mathcal{C}_{as}(\mathbb{R})$-hierarchy. 
For both hierarchies 
there also holds an analogue of the Sato-Wilson equations for the KP hierarchy 
\begin{proposition}
\label{P5.1} Let $\{\mathcal{G}_{\sigma} =KG_{\sigma} K^{-1}\}$ be a $\mathcal{P}_{-}(\mathbb{R})$-deformation of the $\{ G_{\sigma} \}$ in $LT_{\mathbb{Z}}(R)$ and likewise let $\{\mathcal{F}_{j} =PF_{j} P^{-1}\}$ be a  $\mathcal{P}_{\leqslant 0}$-deformation of the $\{ F_{j}\}$ in $LT_{\mathbb{Z}}(R_{as})$. Define the $\{\mathcal{A}_{\sigma}\}$ and the $\{ \mathcal{D}_{j}\}$ as in (\ref{co's}). Then there holds:
\begin{enumerate}
\item[(a)] If the dressing operator $K$ of the $\{\mathcal{G}_{\sigma}\}$ satisfies the equations
\begin{equation}
\label{SW1}
\partial_{\sigma}(K)=\mathcal{A}_{\sigma}K, \text { for all } \sigma \in \Sigma,
\end{equation}
then $\{\mathcal{G}_{\sigma} \}$ is a solution of the $\mathcal{C}_{sh}(\mathbb{C})$-hierarchy. Note that it makes sense to consider equations (\ref{SW1}), because both sides have degree zero in $\Lambda$ or lower.
\item[(b)] Similarly, if the dressing operator $P$ of the $\{\mathcal{F}_{j}\}$ satisfies the equations
\begin{equation}
\label{SW2}
\partial_{j}(P)=\mathcal{D}_{j}P, \text { for all } j \in \mathcal{J},
\end{equation}
then $\{\mathcal{F}_{j} \}$ is a solution of the $\mathcal{C}_{as}(\mathbb{R})$-hierarchy. Note that both sides of the equations (\ref{SW2}) are of degree zero in $\Lambda$ or lower. 
\end{enumerate}
The equations (\ref{SW1}) resp. (\ref{SW2}) are called the Sato-Wilson equations of the $\mathcal{C}_{sh}(\mathbb{C})$-hierarchy resp. the $\mathcal{C}_{as}(\mathbb{R})$-hierarchy.
\end{proposition}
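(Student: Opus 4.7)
The approach is a direct computation: differentiate the conjugation formula $\mathcal{G}_\sigma = K G_\sigma K^{-1}$ and substitute the Sato--Wilson identity. I will only spell out part (a); part (b) goes through with $(K,G_\sigma,\Sigma,\pi_-)$ replaced by $(P,F_j,\mathcal{J},\pi_{\leqslant 0})$.

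First I would record that the equations (\ref{SW1}) are compatible with the ambient algebra: since $K\in\mathcal{P}_-(\mathbb{R})$ has degree $0$, $\partial_\sigma$ preserves diagonals, and $\mathcal{A}_\sigma = -\pi_-(\mathcal{G}_\sigma)$ has degree $\leqslant 0$, both sides of (\ref{SW1}) lie in $P_-(\mathbb{R})$. Next, I would differentiate $\mathcal{G}_{\sigma_2}=KG_{\sigma_2}K^{-1}$ in the $\partial_{\sigma_1}$ direction, using the Leibniz rule, the fact that $G_{\sigma_2}$ has constant coefficients so $\partial_{\sigma_1}(G_{\sigma_2})=0$, and the identity $\partial_{\sigma_1}(K^{-1})=-K^{-1}\partial_{\sigma_1}(K)K^{-1}$. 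Substituting (\ref{SW1}) gives
\begin{equation*}
\partial_{\sigma_1}(\mathcal{G}_{\sigma_2}) \;=\; \mathcal{A}_{\sigma_1}KG_{\sigma_2}K^{-1}-KG_{\sigma_2}K^{-1}\mathcal{A}_{\sigma_1} \;=\; [\mathcal{A}_{\sigma_1},\mathcal{G}_{\sigma_2}].
\end{equation*}

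Recalling $\mathcal{A}_{\sigma_1} = -\pi_-(\mathcal{G}_{\sigma_1})$, this already yields the second equality in (\ref{Laxsh}):
\begin{equation*}
\partial_{\sigma_1}(\mathcal{G}_{\sigma_2}) \;=\; [\mathcal{G}_{\sigma_2},\pi_-(\mathcal{G}_{\sigma_1})].
\end{equation*}
For the first equality, I would use that the $\{G_\sigma\}$ commute, hence after simultaneous conjugation by $K$ so do the $\{\mathcal{G}_\sigma\}$; thus
\begin{equation*}
0 \;=\; [\mathcal{G}_{\sigma_1},\mathcal{G}_{\sigma_2}] \;=\; [\pi_{sh}(\mathcal{G}_{\sigma_1}),\mathcal{G}_{\sigma_2}] + [\pi_-(\mathcal{G}_{\sigma_1}),\mathcal{G}_{\sigma_2}],
\end{equation*}
which gives $[\pi_{sh}(\mathcal{G}_{\sigma_1}),\mathcal{G}_{\sigma_2}] = [\mathcal{G}_{\sigma_2},\pi_-(\mathcal{G}_{\sigma_1})]$, completing the chain of Lax identities.

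There is no real obstacle, the only point worth watching is the legitimacy of the Sato--Wilson equations, i.e.\ the degree check that places both sides in $P_-(\mathbb{R})$; once that is granted, the calculation is mechanical. The antisymmetric analogue uses exactly the same manipulation: differentiate $\mathcal{F}_{j_2}=PF_{j_2}P^{-1}$, invoke (\ref{SW2}) to replace $\partial_{j_1}(P)$ by $\mathcal{D}_{j_1}P$ and $\partial_{j_1}(P^{-1})$ by $-P^{-1}\mathcal{D}_{j_1}$, recognize $\mathcal{D}_{j_1}=-\pi_{\leqslant 0}(\mathcal{F}_{j_1})$, and finally use commutativity of the $\{\mathcal{F}_j\}$ inherited from the $\{F_j\}$ to produce the other form (\ref{Laxas}) of the Lax equation.
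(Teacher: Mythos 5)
Your proposal is correct and follows essentially the same route as the paper: differentiate $\mathcal{G}_{\sigma_2}=KG_{\sigma_2}K^{-1}$, substitute $\partial_{\sigma_1}(K)K^{-1}=\mathcal{A}_{\sigma_1}$ to get $[\mathcal{A}_{\sigma_1},\mathcal{G}_{\sigma_2}]$, and then use the decomposition $\mathcal{A}_{\sigma_1}=\pi_{sh}(\mathcal{G}_{\sigma_1})-\mathcal{G}_{\sigma_1}=-\pi_-(\mathcal{G}_{\sigma_1})$ together with the commutativity of the $\{\mathcal{G}_\sigma\}$ to recover both forms of the Lax equation (\ref{Laxsh}). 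The only cosmetic difference is that you pass through the $\pi_-$ form first and deduce the $\pi_{sh}$ form from commutativity, whereas the paper does it in the opposite order; the content is identical.
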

\begin{proof}
We present again the proof for the case (\ref{SW1}) that for (\ref{SW2}) is similar. Since  $G_{\sigma_{1}}$ is constant w.r.t. the $\{ \partial_{\sigma} \}$, we have 
in general
$$
\partial_{\sigma_{1}}(KG_{\sigma_{2}}K^{-1})=\partial_{\sigma_{1}}(K)K^{-1}KG_{\sigma_{2}}K^{-1}-KG_{\sigma_{2}}K^{-1}\partial_{\sigma_{1}}(K)K^{-1}=[\partial_{\sigma_{1}}(K)K^{-1},\mathcal{G}_{\sigma_{2}}].
$$
According to (\ref{SW1}) each $\partial_{\sigma_{1}}(K)K^{-1}$ equals $\mathcal{A}_{\sigma_{1}}$ and substitution in the last equation results in
$$
\partial_{\sigma_{1}}(\mathcal{G}_{\sigma_{2}})=[\mathcal{A}_{\sigma_{1}},\mathcal{G}_{\sigma_{2}}]=[\pi_{sh}(\mathcal{G}_{\sigma_{1}})-\mathcal{G}_{\sigma_{1}},\mathcal{G}_{\sigma_{2}}]=[\pi_{sh}(\mathcal{G}_{\sigma_{1}}),\mathcal{G}_{\sigma_{2}}],
$$
which proves the claim.
\end{proof}
In both cases solutions of the same system differ by an element of the dressing group that is constant. We illustrate that for system (\ref{SW1}).
If $K_{1}$ is another solution of (\ref{SW1}) for the same set of $\{ \mathcal{A}_{\sigma} \}$, then $K_{1}=KK_{0}$, where $K_{0}$ is a matrix in $LT_{\mathbb{Z}}(R)$
that is constant for all the $\{ \partial_{\sigma}\}$, i.e. $\partial_{\sigma}(K_{0})=0$, for all $\sigma \in \Sigma$. There holds namely,
$$
\partial_{\sigma}(K_{1})=\mathcal{A}_{\sigma}K_{1}=\partial_{\sigma}(K)K^{-1}K_{1}+K \partial_{\sigma}(K_{0})=\mathcal{A}_{\sigma}K_{1}+K \partial_{\sigma}(K_{0}).
$$
This implies $K \partial_{\sigma}(K_{0})=0$ and, as $K$ is invertible, the desired identity holds. Reversely, for any  
$K_{0} \in P_{-}(\mathbb{R})$ satisfying $\partial_{\sigma}(K_{0})=0$ for all $\sigma \in \Sigma$
and any solution $K$ of system (\ref{SW1}), the operator $KK_{0}$ is another  solution of (\ref{SW1}). Proposition \ref{P5.1} offers the possibility to construct solutions of respectively  the $\mathcal{C}_{sh}(\mathbb{C})$-hierarchy
and the $\mathcal{C}_{as}(\mathbb{R})$-hierarchy by finding dressing matrices that satisfy the equations (\ref{SW1}) resp. (\ref{SW2}). This can be achieved  for each of the two hierarchies by constructing special vectors, called wave matrices, in appropriate left $LT_{\mathbb{Z}}(R)$- and $LT_{\mathbb{Z}}(R_{as})$-modules that satisfy a set of relations, the linearization of the hierarchy in question, in this module. This will be the topic of the next section.

\section{Linearizations and wave matrices } 
\label{sec4}

We start out in this section with a $\mathcal{P}_{-}(\mathbb{R})$-deformation $\{\mathcal{G}_{\sigma}\}$ of the $\{ G_{\sigma} \}$ together with the projections $\{ \mathcal{B}_{\sigma}:=\pi_{sh}(\mathcal{G}_{\sigma}) \mid \sigma \in \Sigma\}$ in the setting $(R,\{\partial_{\sigma} \})$ and a  $\mathcal{P}_{\leqslant 0}$-deformation $\{\mathcal{F}_{j} \}$ of the $\{ F_{j}\}$ together with the projections $\{ \mathcal{C}_{j}:=\pi_{as}(\mathcal{F}_{j}) \mid j \in \mathcal{J} \}$ in the setting $(R_{as},\{\partial_{j} \})$.   Before giving the formal description of each of the two modules, we first present two sets of equations for respectively the 
$\{ \mathcal{G}_{\sigma}  \}$ and the 
$\{ \mathcal{F}_{j}  \}$ and a number of manipulations with these equations that lead to the Lax equations (\ref{Laxsh}) respectively (\ref{Laxas}) for them. Like this it will be clear that these sets of equations are connected to the hierarchy for which they deliver the Lax equations. Later on, we present the formal framework in which these manipulations are justified.
For the $\{ \mathcal{G}_{\sigma}  \}$ one searches an appropriate $\varphi$ in the module such that the following  set of equations holds in the module 
\begin{align}
\label{LinG1}
&\mathcal{G}_{\sigma} \varphi=\varphi 
G_{\sigma} \text{ for all } \sigma \in \Sigma ,\\
\label{LinG2}
&\partial_{\sigma}(\varphi)=\pi_{sh}(\mathcal{G}_{\sigma} )\varphi  \text{ for all }\sigma \in \Sigma  .
\end{align}
This set is called the {\it linearization of the $\mathcal{C}_{sh}(\mathbb{C})$-hierarchy}. 
 To get the Lax equations of the $\mathcal{C}_{sh}(\mathbb{C})$-hierarchy we apply $\partial_{\sigma_{1}}$ to the equation (\ref{LinG1}) for $\sigma_{2}$, use a Leibnitz rule for products and substitute twice equation (\ref{LinG2}). 
This yields
\begin{align}
&\partial_{\sigma_{1}}(\mathcal{G}_{\sigma_{2}} \varphi - \varphi G_{\sigma_{2}}) =\partial_{\sigma_{1}}(\mathcal{G}_{\sigma_{2}} )\varphi +
\mathcal{G}_{\sigma_{2}} (\partial_{\sigma_{1}}(\varphi)) - (\partial_{\sigma_{1}}(\varphi)) G_{\sigma_{2}}  & \label{linGLax}\\ \notag
&=\partial_{\sigma_{1}}(\mathcal{G}_{\sigma_{2}} )\varphi + \mathcal{G}_{\sigma_{2}} \pi_{as}(\mathcal{G}_{\sigma_{1}} ) \varphi -\pi_{as}(\mathcal{G}_{\sigma_{1}} ) \varphi  G_{\sigma_{2}}\\ \notag
&= \{\partial_{\sigma_{1}}(\mathcal{G}_{\sigma_{2}} )-[ \pi_{as}(\mathcal{G}_{\sigma_{1}} ),\mathcal{G}_{\sigma_{2}} \;] \}
\varphi =0. &
\label{}
\end{align}
Hence, if the annihilator of $\varphi$ in the $LT_{\mathbb{Z}}(R)$-module is equal to zero, then the $\{ \mathcal{G}_{\sigma}  \}$ satisfy the Lax equations of the $\mathcal{C}_{sh}(\mathbb{C})$-hierarchy.

\begin{remark}
\label{R3.1}
The form of the {\it linearization of the $\mathcal{C}_{as}(\mathbb{R})$-hierarchy} is as follows:
\begin{align}
\label{LinF1}
&\mathcal{F}_{j} \psi=\psi F_{j} \text{ for all } j  \in \mathcal{J} ,\\
\label{LinF2}
&\partial_{j}(\psi)=\pi_{as}(\mathcal{F}_{j} )\psi  \text{ for all } j \in \mathcal{J}.
\end{align}
Here $\psi$ is a vector in the left $LT_{\mathbb{Z}}(R_{as})$-module corresponding to the $\mathcal{C}_{as}(\mathbb{R})$-hierarchy. A similar set of manipulations as above yields then the Lax equations for the $\{ \mathcal{F}_{j}  \}$, if the annihilator of $\psi$ is zero.
\end{remark}

Now we discuss the $LT_{\mathbb{Z}}(R)$-module for the $\mathcal{C}_{sh}(\mathbb{C})$-hierarchy.
Recall from the manipulations that we need in (\ref{LinG1}) and (\ref{LinG2}) that one can multiply $\varphi $ from the left with elements from $LT_{\mathbb{Z}}(R)$ like $\mathcal{G}_{\sigma}$ and 
 $\pi_{sh}(\mathcal{G}_{\sigma} )$ and from the right with all the basic matrices $\{ G_{\sigma}\}$. Further, there should be a left action of each
 $\partial_{\sigma}$ on $\varphi$ that satisfies Leibnitz with respect to the left $LT_{\mathbb{Z}}(R)$-action and finally the annihilator of $\varphi$ should be zero. To realize the first, one builds a suitable left $LT_{\mathbb{Z}}(R)$--module, where also the other actions can be given sense. The actual form of the elements
in the module is guided by the $\varphi_{0}$ corresponding to the trivial solution $\{ \mathcal{G}_{\sigma}=G_{\sigma} \}$
of the
hierarchy. In that case the equations (\ref{LinG1}) and (\ref{LinG2}) reduce to
\begin{align}\label{trivlin}
G_{\sigma}\varphi_{0}=\varphi_{0} G_{\sigma}, 
  \text{ and }
\partial_{\sigma }(\varphi_{0})=G_{\sigma} \varphi_{0}.
\end{align}
If one thinks of $\varphi_{0}$ as a $\mathbb{Z} \times \mathbb{Z}$-matrix then the first equation in (\ref{trivlin}) tells you that $\varphi_{0}$ commutes with all $\{G_{\sigma}\}$ and, since $\partial_{\sigma}$ is the algebraic substitute for differentiation w.r.t. the flow parameter corresponding to $G_{\sigma}$, the second equation of (\ref{trivlin}) yields for all $\sigma \in \Sigma$ that $\tilde{\varphi}_{0}:=\exp(-t_{\sigma}G_{\sigma})\varphi_{0}$ is constant for $\partial_{\sigma}$, i.e. 
$\partial_{\sigma}(\tilde{\varphi}_{0})=0$.  
This leads one to consider the formal series
\begin{equation}\label{}
\varphi_0:=\exp( \sum_{\sigma \in \Sigma}  t_{\sigma} G_{\sigma}).
\end{equation}
Under suitable convergence conditions, see Section \ref{sec5}, this series corresponds to a well-defined $\mathbb{Z} \times \mathbb{Z}$-matrix, it commutes with all the $\{ G_{\sigma}\}$ and, if one lets $\partial_{\sigma}$ act on $\varphi_{0}$ as $\frac{\partial}{\partial t_{\sigma }}$, then it satisfies the second equation in (\ref{trivlin}).
The module for the linearization will consist of formal perturbations of this trivial solution $\varphi_{0}$ by formal multiplication with elements from $LT_{\mathbb{Z}}(R)$ from the left.  Consider namely the collection $\mathcal{O}_{sh}$ 
of formal products
\begin{equation}\label{Msh}
\{ \sum_{r=-\infty}^{N } d_r\Lambda^{r} \} \exp( \sum_{\sigma \in \Sigma} t_{\sigma}  G_{\sigma} ) ,\text{ where all }d_{r}\in \mathcal{D}_1(R). 
\end{equation}
Notice that, even if $\varphi_0$ is a well-defined $\mathbb{Z} \times \mathbb{Z}$-matrix, then the product in (\ref{Msh}) of the perturbation factor from $LT_{\mathbb{Z}}(R)$ and $\varphi_0$ is in general not a well-defined $\mathbb{Z} \times \mathbb{Z}$-matrix. So, it is necessary to keep the two factors separate.
Analogously to the terminology used at the KP hierarchy, we define
\begin{definition}
\label{defomi}
The elements of $\mathcal{O}_{sh}$ are called {\it oscillating matrices} for the $\mathcal{C}_{sh}(\mathbb{C})$-hierarchy.
\end{definition}
Despite the fact that the product in (\ref{Msh}) is formal, there is a
well-defined left action of
$LT_{\mathbb{Z}}(R)$ on it. For all $\ell_1$ and $\ell_2 \in LT_{\mathbb{Z}}(R)$ one puts namely
\begin{equation}\label{5.14}
 \ell_1 \{ \ell_2 \}\varphi_0=\{\ell_1 \ell_2
\}\varphi_0.
\end{equation}
Also the right multiplication with $\{ G_{\sigma} \}$ 
is well-defined on elements of $\mathcal{O}_{sh}$
\begin{equation}
\label{ractie}
\{ \ell \}\varphi_0G_{\sigma}:=\{ \ell G_{\sigma} \}\varphi_0.
\end{equation}
An action of the derivations $\partial_{\sigma}$ on $\mathcal{O}_{sh}$ is defined as if the product in the module $\mathcal{O}_{sh}$ is a real one
\begin{equation}\label{derMsh}
\partial_{\sigma}( \{ \sum_{j=-\infty}^{N } d_j \Lambda^{j} \} \varphi_0 )=  \{ \sum_{j=-\infty}^{N } \partial_{\sigma}(d_j) \Lambda^{j} +\sum_{j=-\infty}^{N } d_j \Lambda^{j} G_{\sigma}
 \} \varphi_0.
\end{equation}
It is a direct verification that this action of $\partial_{\sigma}$ satisfies the Leibnitz rule used in the manipulations in (\ref{linGLax}).
Note
that $\mathcal{O}_{sh}$ is a free $LT_{\mathbb{Z}}(R)$-module with generator $\varphi_0$. Hence scratching $\varphi$ from the equations (\ref{LinG1}) and (\ref{LinG2})
is
permitted as soon as one knows that $\varphi=\hat{\varphi} \varphi_0$ with $\hat{\varphi} \in I(LT_{\mathbb{Z}}(R))$. 
Moreover, the equations
$\mathcal{G}_{\sigma} \varphi = \varphi  G_{\sigma}$ imply then that 
$\mathcal{G}_{\sigma}=\hat{\varphi}G_{\sigma} \hat{\varphi}^{-1}
$. Since we are interested in $\mathcal{P}_{-}(\mathbb{R})$-deformations of the $\{G_{\sigma}\}$, this brings us to the following 
\begin{definition}
\label{defwmi}
An oscillating 
matrix 
$\varphi=\hat{\varphi} \varphi_0$, with $\hat{\varphi} \in \mathcal{P}_{-}(R)$,
is
called {\it a wave matrix } 
for the matrices $\{ \mathcal{G}_{\sigma}  \}$, if $\varphi$ and the $\{ \mathcal{G}_{\sigma}  \}$ satisfy the equations (\ref{LinG1}) and (\ref{LinG2}). 
\end{definition}
Since the manipulations to get the Lax equations are well-defined on
such a $\varphi$, the $\{ \mathcal{G}_{\sigma}  \}$
form a solution of the $\mathcal{C}_{sh}(\mathbb{C})$-hierarchy.  This follows also from Proposition \ref{P5.1}, as $\hat{\varphi}$ satisfies the Sato-Wilson equations (\ref{SW1}) of this hierarchy. Substitute namely $\varphi=\hat{\varphi} \varphi_0$ in equation (\ref{LinG2}) and one gets 
$$
\partial_{\sigma}(\varphi)=(\partial_{\sigma}(\hat{\varphi})\hat{\varphi}^{-1}+\hat{\varphi}G_{\sigma}\hat{\varphi}^{-1})\varphi=\mathcal{B}_{\sigma}\varphi
$$
and, since  $\mathcal{O}_{sh}$ is a free $LT_{\mathbb{Z}}(R)$-module with generator $\varphi$, one obtains 
$\partial_{\sigma}(\hat{\varphi})\hat{\varphi}^{-1}
=\mathcal{A}_{\sigma}$.

\begin{remark}
\label{R3.2} For the form of the $LT_{\mathbb{Z}}(R_{as})$-module one looks also first at the solution $\psi_{0}$ of the linearization of the $\mathcal{C}_{as}(\mathbb{R})$-hierarchy corresponding to the trivial solution $\{\mathcal{F}_{j}=F_{j}\}$. Thus one obtains the series 
$$
\psi_{0}=\exp(\sum_{j \in \mathcal{J}}t_{j}F_{j}),
$$ 
which determines under suitable convergence conditions a well-defined $\mathbb{Z} \times \mathbb{Z}$-matrix. The module $\mathcal{O}_{as}$ of {\it oscillating matrices} for the $\mathcal{C}_{as}(\mathbb{R})$-hierarchy consists of all formal products of a perturbation factor from $LT_{\mathbb{Z}}(R_{as})$ and $\psi_{0}$. If you replace in (\ref{5.14}) everywhere $\varphi_{0}$ by $\psi_{0}$ and $LT_{\mathbb{Z}}(R)$ by $LT_{\mathbb{Z}}(R_{as})$, then you get the $LT_{\mathbb{Z}}(R_{as})$-module structure on $\mathcal{O}_{as}$. The right action of the basis of $\mathcal{C}_{as}(\mathbb{R})$ you obtain by altering in (\ref{ractie}) each $\varphi_{0}$ into a $\psi_{0}$ and each $G_{\sigma}$ into an $F_{j}$. Finally, the action of the derivations $\{ \partial_{j}\}$ on $\mathcal{O}_{as}$ you obtain if you replace in (\ref{derMsh}) each $\partial_{\sigma}$ by a $\partial_{j}$, each $\varphi_{0}$ by a $\psi_{0}$ and 
each $G_{\sigma}$ by an $F_{j}$. $\mathcal{O}_{as}$ is a free $LT_{\mathbb{Z}}(R_{as})$-module with generator $\psi_0$. The elements $\psi=\hat{\psi}\psi_{0}$ in $\mathcal{O}_{as}$, where $\hat{\psi}$ is invertible in $LT_{\mathbb{Z}}(R_{as})$, have a zero annihilator. Let $\psi=\hat{\psi}\psi_{0}$ be an element of $\mathcal{O}_{as}$ such that $\hat{\psi} \in \mathcal{P}_{\leqslant 0}$. Then $\psi$ is called a {\it wave matrix} for the 
$\{\mathcal{F}_{j}=\hat{\psi}F_{j}\hat{\psi}^{-1} \}$, if $\psi$ and the $\{\mathcal{F}_{j} \}$ satisfy the equations (\ref{LinF1}) and (\ref{LinF2}). In particular, the $\{\mathcal{F}_{j} \}$ are then a solution of the $\mathcal{C}_{as}(\mathbb{R})$-hierarchy. This is once more confirmed by the fact that the perturbation factor $\hat{\psi}$ of the wave matrix of the $\{\mathcal{F}_{j} \}$ satisfies the Sato-Wilson equations (\ref{SW2}).
\end{remark}

If one wants to prove
the equations (\ref{LinG1}) and (\ref{LinG2}) for an oscillating matrix $\varphi$ for the $\mathcal{C}_{sh}(\mathbb{C})$-hierarchy
 of the right form or the equations (\ref{LinF1}) and (\ref{LinF2}) for an oscillating matrix $\psi$ for the $\mathcal{C}_{as}(\mathbb{R})$-hierarchy
 of the appropriate shape, it suffices to prove weaker results, as the next Proposition demonstrates
\begin{proposition}\label{Pweak} 
For each hierarchy we have
\begin{enumerate}
\item[(a)] 
Let $\varphi=\hat{\varphi} \varphi_0$, with $\hat{\varphi} \in \mathcal{P}(\mathbb{R})$,
be an oscillating matrix for the $\mathcal{C}_{sh}(\mathbb{C})$-hierarchy. If it satisfies for all $\sigma \in \Sigma$ 
\begin{equation*}\notag
\partial_{\sigma}(\varphi)=S_{\sigma} \phi, \text{ with } S_{\sigma } \in \mathcal{SH}(R),
\end{equation*}
then $S_{\sigma}=\pi_{sh}(\mathcal{G}_{\sigma})$, where $\mathcal{G}_{\sigma}:=\hat{\varphi} ( G_{\sigma} ) \hat{\varphi}^{-1}$.
In particular the $\{ \mathcal{G}_{\sigma}  \}$ form a
solution to the
$\mathcal{C}_{sh}(\mathbb{C})$-hierarchy  and $\varphi$ is a wave matrix 
for this solution.
\item[(b)]Let $\psi=\hat{\psi} \psi_0$, with $\hat{\varphi} \in \mathcal{P}_{\leqslant 0}$,
be an oscillating matrix for the $\mathcal{C}_{as}(\mathbb{R})$-hierarchy. If it satisfies for all $j \in \mathcal{J}$ 
\begin{equation*}\notag
\partial_{j}(\psi)=A_{j} \psi, \text{ with } A_{j } \in \mathcal{AS}(R_{as}),
\end{equation*}
then $A_{j}=\pi_{as}(\mathcal{F}_{j})$, where $\mathcal{F}_{j}:=\hat{\psi} ( F_{j} ) \hat{\psi}^{-1}$
In particular the $\{ \mathcal{F}_{j}  \}$ form a
solution to the
$\mathcal{C}_{as}(\mathbb{R})$-hierarchy  and $\psi$ is a wave matrix 
for this solution.
\end{enumerate}
\end{proposition}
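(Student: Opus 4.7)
The plan is to exploit the fact that the module $\mathcal{O}_{sh}$ is free with generator $\varphi_0$ and that $\varphi=\hat{\varphi}\varphi_0$ has a zero annihilator, so that the hypothesis $\partial_{\sigma}(\varphi)=S_{\sigma}\varphi$ can be rewritten as a genuine identity in $LT_{\mathbb{Z}}(R)$. First I would compute $\partial_{\sigma}(\varphi)$ directly from the rule (\ref{derMsh}):
\[
\partial_{\sigma}(\hat{\varphi}\varphi_{0})=\bigl(\partial_{\sigma}(\hat{\varphi})+\hat{\varphi}G_{\sigma}\bigr)\varphi_{0}=\bigl(\partial_{\sigma}(\hat{\varphi})\hat{\varphi}^{-1}+\hat{\varphi}G_{\sigma}\hat{\varphi}^{-1}\bigr)\varphi=\bigl(\partial_{\sigma}(\hat{\varphi})\hat{\varphi}^{-1}+\mathcal{G}_{\sigma}\bigr)\varphi.
\]
Since $\mathcal{O}_{sh}$ is free on $\varphi_{0}$ and $\hat{\varphi}\in I(LT_{\mathbb{Z}}(R))$, comparison with $\partial_{\sigma}(\varphi)=S_{\sigma}\varphi$ yields the identity $S_{\sigma}=\partial_{\sigma}(\hat{\varphi})\hat{\varphi}^{-1}+\mathcal{G}_{\sigma}$ inside $LT_{\mathbb{Z}}(R)$.

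Next I would use the direct sum decomposition (\ref{Cdirsum}) to identify $S_\sigma$. Because $\hat{\varphi}\in\mathcal{P}_{-}(\mathbb{R})$, both $\hat{\varphi}$ and its inverse lie in $P_{-}(\mathbb{R})$; the derivation $\partial_{\sigma}$ preserves $R(\mathbb{R})$, so $\partial_{\sigma}(\hat{\varphi})$ still has its zeroth diagonal in $\mathcal{D}_{1}(R(\mathbb{R}))$, and the associative algebra $P_{-}(\mathbb{R})$ is closed under multiplication. Therefore $\partial_{\sigma}(\hat{\varphi})\hat{\varphi}^{-1}\in P_{-}(\mathbb{R})$, and applying $\pi_{sh}$ to the identity above annihilates this term. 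As $S_{\sigma}\in\mathcal{SH}(R)$ by hypothesis, we get $S_{\sigma}=\pi_{sh}(S_{\sigma})=\pi_{sh}(\mathcal{G}_{\sigma})$, which is the desired identification. As a by-product, $\partial_{\sigma}(\hat{\varphi})\hat{\varphi}^{-1}=\pi_{sh}(\mathcal{G}_{\sigma})-\mathcal{G}_{\sigma}=-\pi_{-}(\mathcal{G}_{\sigma})=\mathcal{A}_{\sigma}$, so $\hat{\varphi}$ satisfies the Sato-Wilson equations (\ref{SW1}).

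To conclude that $\varphi$ is a wave matrix, equation (\ref{LinG2}) is now just the hypothesis, while (\ref{LinG1}) follows immediately from $\mathcal{G}_{\sigma}\varphi=\hat{\varphi}G_{\sigma}\hat{\varphi}^{-1}\hat{\varphi}\varphi_{0}=\hat{\varphi}G_{\sigma}\varphi_{0}=\hat{\varphi}\varphi_{0}G_{\sigma}=\varphi G_{\sigma}$, where the middle equality uses that $G_{\sigma}$ commutes with $\varphi_{0}$ via the right action (\ref{ractie}). Proposition \ref{P5.1} then implies that the $\{\mathcal{G}_{\sigma}\}$ form a solution of the $\mathcal{C}_{sh}(\mathbb{C})$-hierarchy. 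Part (b) is proved in exactly the same way, using the module $\mathcal{O}_{as}$, the decomposition (\ref{Casdirsum}), and the Sato-Wilson equations (\ref{SW2}); the only point requiring attention is again the verification that $\partial_{j}(\hat{\psi})\hat{\psi}^{-1}$ lies in $P_{\leqslant 0}$, which is automatic since $\hat{\psi}\in\mathcal{P}_{\leqslant 0}$ and the derivations preserve $R_{as}$. The main (mild) obstacle in the whole argument is this bookkeeping step: checking that the left multiplier $\partial_{\sigma}(\hat{\varphi})\hat{\varphi}^{-1}$ sits in the correct summand of the decomposition, so that $\pi_{sh}$ (respectively $\pi_{as}$) cleanly extracts $S_{\sigma}$ (respectively $A_{j}$).
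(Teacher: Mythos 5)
Your proposal is correct and follows essentially the same route as the paper: use freeness of the module to turn the hypothesis into the operator identity $\partial_{\sigma}(\hat{\varphi})+\hat{\varphi}G_{\sigma}=S_{\sigma}\hat{\varphi}$, multiply on the right by $\hat{\varphi}^{-1}$, and apply $\pi_{sh}$ (resp. $\pi_{as}$), the only difference being that you spell out the bookkeeping that $\partial_{\sigma}(\hat{\varphi})\hat{\varphi}^{-1}\in P_{-}(\mathbb{R})$ and the link to the Sato--Wilson equations, which the paper leaves implicit.
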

\begin{proof}
We just give the proof for part (a), that of (b) is similar and left to the reader.
From the definition of the action of $\partial_{\sigma}$ on $\mathcal{O}_{sh}$ and the fact that $\mathcal{O}_{sh}$ is a free $LT_{\mathbb{Z}}(R)$-module with generator $\varphi_0$, one gets the operator equation
\begin{equation}\label{5.16}
\partial_{\sigma}(\hat{\varphi} ) - \hat{\varphi}  G_{\sigma} =S_{\sigma} \hat{\varphi}.
\end{equation}
Multiplying this equation from the
right with $ \hat{\varphi}^{-1}$ and applying the projection $\pi_{sh}$
gives the desired result. 
\end{proof} 
\begin{remark}
The wave matrices for the $\mathcal{C}_{sh}(\mathbb{C})$-hierarchy and those for the $\mathcal{C}_{as}(\mathbb{R})$-hierarchy are in their context the analogues of the Baker-Akhiezer functions for the $KP$-hierarchy.
\end{remark}

It might happen for both the wave matrices for the $\mathcal{C}_{sh}(\mathbb{C})$-hierarchy as for those of the $\mathcal{C}_{as}(\mathbb{R})$-hierarchy that different wave matrices give the same solution of the hierarchy. We discuss here the freedom one has and we start with two wave matrices $\varphi_1=\hat{\varphi}_{1}\varphi_{0}$ and
$\varphi_2=\hat{\varphi}_{2}\varphi_{0}$ that give the same solution of the $\mathcal{C}_{sh}(\mathbb{C})$-hierarchy.
So both $\hat{\phi}_{1}$ and $\hat{\phi}_{2}$ belong to $\mathcal{P}_{-}(\mathbb{R})$ and there holds for all $\sigma \in \Sigma$ that 
$$
\mathcal{G}_{\sigma}=\hat{\varphi}_1  
G_{\sigma} \hat{\varphi}_{1}^{-1}=\hat{\varphi}_2 G_{\sigma} \hat{\phi_2}^{-1}.
$$ 
Then one has first of all that $\hat{\varphi}_1^{-1}\hat{\varphi}_2$ commutes with all the $\{ G_{\sigma}), \sigma \in \Sigma.$ 
As we have seen in the proof of Lemma \ref{L2.2}, commuting in $LT_{\mathbb{Z}}(R)$ with the basis of $\mathcal{C}_{sh}(\mathbb{C})$ is equivalent with commuting with $\Lambda$. 
Therefore we get
$$
\hat{\varphi}_1^{-1}\hat{\varphi}_2=\sum_{i \leqslant 0} j_{1}(a_{i}) \Lambda^{i} \in \mathcal{P}_{-}(\mathbb{R}).
$$
Hence $a_{0} \in R_{as}^{*}, a_{0} >0,$ and all other $a_i \in R$. 
One has seen in the proof of
Proposition \ref{Pweak} that for all
$\sigma \in \Sigma$
and $i=1,2,$ there holds
$$
\partial_{\sigma }(\hat{\varphi}_i )=\pi_{sh}(\mathcal{G}_{\sigma}) \hat{\varphi}_i -\hat{\varphi}_i
G_{\sigma}.
$$
Hence, if one applies the operator $\partial_{\sigma}$ to the equality $\hat{\varphi}_2=\hat{\varphi}_1 \sum_{i \leqslant 0} j_{1}(a_{i}) \Lambda^{i} $, then one obtains
\begin{align}
\partial_{\sigma}(\hat{\varphi}_2)=\partial_{\sigma}(\hat{\varphi}_1)\sum_{i \leqslant 0} j_{1}(a_{i}) \Lambda^{i} +
\hat{\phi}_1\sum_{i \leqslant 0} i_{1}(\partial_{j}(a_{i})) \Lambda^{i}=
\label{5.16.1}\\
\pi_{sh}(\mathcal{G}_{\sigma}) \hat{\varphi}_1-\hat{\varphi}_1G_{\sigma} \sum_{i \leqslant 0} i_{1}(a_{i}) \Lambda^{i} +
\hat{\varphi}_1 \sum_{i \leqslant 0} j_{1}(\partial_{\sigma}(a_{i})) \Lambda^{i}= \label{5.16.2}\\
\pi_{sh}(\mathcal{G}_{\sigma}) \hat{\varphi}_2-\hat{\varphi}_2 G_{\sigma} + \hat{\varphi}_1\sum_{i \leqslant 0} j_{1}(\partial_{\sigma}(a_{i})) \Lambda^{i}=\\
\partial_{\sigma}(\hat{\varphi}_2)+\hat{\varphi}_1\sum_{i \leqslant 0} j_{1}(\partial_{\sigma}(a_{i})) \Lambda^{i}
\end{align}
Hence, the expression $\hat{\varphi}_1\sum_{i \leqslant 0} j_{1}(\partial_{\sigma}(a_{i})) \Lambda^{i}$ has to be zero.
Since  $\hat{\phi_1}$ is invertible in $LT_{\mathbb{Z}}(R)$, one must have for all $i \leqslant 0$ and  all $\sigma \in \Sigma$ 
that
$\partial_{j }( a_i)=0.$  One proceeds in the same way with two wave matrices $\psi_1=\hat{\psi}_{1}\psi_{0}$ and
$\psi_2=\hat{\psi}_{2}\psi_{0}$ that give the same solution of the $\mathcal{C}_{as}(\mathbb{R})$-hierarchy. We state the results in the following Corollary: 
\begin{corollary}\label{c5.2}
The freedom for each hierarchy in the wave matrices that yield the same solution is given by
\begin{enumerate}
\item[(a)] 
Assume $\varphi_1$ and $\varphi_2$ are wave matrices corresponding to the same solution $\{ \mathcal{G}_{\sigma}  \}$ of the $\mathcal{C}_{sh}(\mathbb{C})$-hierarchy.
Then there is an element $\sum_{i \leqslant 0} j_{1}(a_{i}) \Lambda^{i}$ in $\mathcal{P}_{-}(\mathbb{R})$
 such that
$$
\hat{\varphi}_2=\hat{\varphi}_1 \sum_{i \leqslant 0} j_{1}(a_{i}) \Lambda^{i}, \text{ with }a_{0} \in R(\mathbb{R})^{*}, a_{0} >0, \text{ and all other }a_{i} \in R.
$$
Moreover, all $a_{i}$ are constant for the derivations $\partial_{\sigma}$, i.e. $\partial_{\sigma}(a_{i})=0.$
\item[(b)] Assume that $\psi_1$ and $\psi_2$ are wave matrices corresponding to the same solution $\{ \mathcal{F}_{j}  \}$ of the $\mathcal{C}_{as}(\mathbb{R})$-hierarchy.
Then there is an element $\sum_{i \leqslant 0} j_{1}(b_{i}) \Lambda^{i}$ in $\mathcal{P}_{\leqslant 0}$ such that
$$
\hat{\psi}_2=\hat{\psi}_1 \sum_{i \leqslant 0} j_{1}(b_{i}) \Lambda^{i}, \text{ with }b_{0} \in R_{as}^{*}, b_{0} >0, \text{ and all other }b_{i} \in R_{as}.
$$
Moreover, all the $b_i$ are constant for the derivations $\partial_{j}$, i.e. $\partial_{j}(b_i)=0.$
\end{enumerate}
\end{corollary}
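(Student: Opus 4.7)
My plan is to handle part (a) in detail and note that part (b) follows by the identical argument in the antisymmetric setting, replacing $G_\sigma$ by $F_j$, $\mathcal{P}_{-}(\mathbb{R})$ by $\mathcal{P}_{\leqslant 0}$, $R$ by $R_{as}$, and $\pi_{sh}$ by $\pi_{as}$. Given two wave matrices $\varphi_i=\hat{\varphi}_i\varphi_0$, $i=1,2$, for the same solution $\{\mathcal{G}_\sigma\}$, the defining relation (\ref{LinG1}) together with injectivity of the action on the free module $\mathcal{O}_{sh}$ gives $\mathcal{G}_\sigma=\hat{\varphi}_1 G_\sigma \hat{\varphi}_1^{-1}=\hat{\varphi}_2 G_\sigma\hat{\varphi}_2^{-1}$. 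Setting $C:=\hat{\varphi}_1^{-1}\hat{\varphi}_2$, this forces $C$ to commute with every basic direction $G_\sigma$, $\sigma \in \Sigma$. Since $\hat{\varphi}_1,\hat{\varphi}_2 \in \mathcal{P}_{-}(\mathbb{R})$ and the latter is a group, $C$ belongs to $\mathcal{P}_{-}(\mathbb{R})$.

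The second step is to determine the centralizer explicitly. I would reuse the inductive diagonal-peeling argument from the proof of Lemma \ref{L2.2}: commuting with $G_{11}=\Lambda-\Lambda^{-1}$ is in fact equivalent to commuting with $\Lambda$ itself, because $G_{11}$ has leading diagonal $\Lambda$ with invertible leading coefficient and the lower-order correction can be removed recursively. If $C=\sum_{i\leqslant 0} d_i(C)\Lambda^{i}$, then $\Lambda C\Lambda^{-1}=C$ combined with the formula (\ref{Ldiag}) yields $d_i(C)=\Lambda d_i(C)\Lambda^{-1}$, i.e.\ each diagonal matrix $d_i(C)$ is translation-invariant along the diagonal and therefore of the form $j_1(a_i)$ for some $a_i\in R$. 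The conditions $d_0(C)\in\mathcal{D}_1(R(\mathbb{R}))^{*}$ and $d_0(C)>0$ coming from $C\in\mathcal{P}_{-}(\mathbb{R})$ then force $a_0\in R(\mathbb{R})^{*}$ with $a_0>0$, giving the first assertion.

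The final step is to show $\partial_\sigma(a_i)=0$. Here I would differentiate the relation $\hat{\varphi}_2=\hat{\varphi}_1 C$ using the operator identity (\ref{5.16}), which reads $\partial_\sigma(\hat{\varphi}_k)=\pi_{sh}(\mathcal{G}_\sigma)\hat{\varphi}_k-\hat{\varphi}_k G_\sigma$ for $k=1,2$. Applying Leibniz to the right-hand side, exploiting that $C$ commutes with every $G_\sigma$, and substituting (\ref{5.16}) for $\hat{\varphi}_1$ causes all terms involving $\pi_{sh}(\mathcal{G}_\sigma)$ and $G_\sigma$ to collapse against $\partial_\sigma(\hat{\varphi}_2)$, leaving the identity
\[
\hat{\varphi}_1 \sum_{i\leqslant 0} j_1(\partial_\sigma(a_i)) \Lambda^{i}=0.
\]
Since $\hat{\varphi}_1\in I(LT_{\mathbb{Z}}(R))$, invertibility forces $\partial_\sigma(a_i)=0$ for each $i\leqslant 0$ and each $\sigma\in\Sigma$, completing (a). I expect the main technical obstacle to be the centralizer computation, specifically verifying cleanly that commutation with the full family $\{G_\sigma\}$ in $LT_{\mathbb{Z}}(R)$ (not merely in $\mathcal{SH}(\mathbb{C})$, as in Lemma \ref{L2.2}) really forces each diagonal coefficient to be a scalar diagonal $j_1(a_i)$; once that is in place, the differentiation step is a routine consequence of the Sato--Wilson identity and the freeness of $\mathcal{O}_{sh}$.
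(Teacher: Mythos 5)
Your proposal is correct and follows essentially the same route as the paper: identify $C=\hat{\varphi}_1^{-1}\hat{\varphi}_2$ as an element of $\mathcal{P}_{-}(\mathbb{R})$ commuting with all $G_{\sigma}$, reduce via the diagonal-peeling argument of Lemma \ref{L2.2} to commutation with $\Lambda$ so that each $d_i(C)=j_1(a_i)$, and then differentiate $\hat{\varphi}_2=\hat{\varphi}_1 C$ using the identity $\partial_{\sigma}(\hat{\varphi}_k)=\pi_{sh}(\mathcal{G}_{\sigma})\hat{\varphi}_k-\hat{\varphi}_k G_{\sigma}$ to force $\partial_{\sigma}(a_i)=0$. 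The centralizer verification you flag as the main obstacle is exactly the step the paper handles by citing the proof of Lemma \ref{L2.2}, and your explicit treatment of it is sound.
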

If one wants to use the set-up of oscillating matrices, linearizations and wave matrices, to construct concrete solutions of both hierarchies, then one has to take care that $\varphi_{0}$ and $\psi_{0}$ are 
well-defined $\mathbb{Z} \times \mathbb{Z}$-matrices and that the product between the perturbation factor and $\varphi_{0}$ or $\psi_{0}$ is also well-defined and yields a $\mathbb{Z} \times \mathbb{Z}$-matrix. In the next section we present for both hierarchies such a convergent framework in the style of \cite{Segal-Wilson}.

\section{The construction of solutions}
\label{sec5}

A natural way to obtain real or complex $\mathbb{Z} \times \mathbb{Z}$-matrices is the following: take a real or complex Hilbert space $H$ with a Hilbert basis $\{ e_{i} \mid i \in \mathbb{Z}\}$.
Then there corresponds to each bounded operator $b:H \to H$, a real or complex $\mathbb{Z} \times \mathbb{Z}$-matrix $[b]=(b_{ij})$ by the formula
$$
b(e_{j})=\sum_{i \in \mathbb{Z}} b_{ij} e_{i}.
$$
The wave matrices for both hierarchies that we will produce in this section, are constructed from a geometric context by using this principle. The relevant Hilbert spaces for this paper consist of the spaces $\mathcal{H}(k), k=\mathbb{R} \text{ or }\mathbb{C}$, of $\mathbb{Z} \times 1$-matrices with coefficients from $k$ given by
$$
\mathcal{H}(k)=\{ \vec{x} =\sum_{n \in \mathbb{Z}}x_{n} \vec{e}\,(n) \mid x_{n} \in k, \,  \sum_{n \in \mathbb{Z}} |x_{n}|^{2} < \infty \}.
$$
We put on $\mathcal{H}(k)$ the standard inner product 
$$
< \vec{x} \mid  \vec{y}>=\sum_{n \in \mathbb{Z}} x_{n}\overline{y}_{n}
$$
so that the $\{ \vec{e}\,(n) \mid n \in \mathbb{Z} \}$ form an orthonormal basis of $\mathcal{H}(k)$ and for each $b \in B(\mathcal{H}(k))$, the bounded $k$-linear operators from $\mathcal{H}(k)$ to itself, the matrix $[b]$ is taken w.r.t. this Hilbert basis. In particular, the action of $b$ on $\vec{x}$ is multiplying from the left, $M_{[b]}$, with 
$[b]$.
The commuting flows that play a role in the $\mathcal{C}_{sh}(\mathbb{C})$-hierarchy resp. $\mathcal{C}_{as}(\mathbb{R})$-hierarchy come from the commuting directions  $\{ G_{\sigma}\}$ resp. $\{ F_{j} \}$ we started with and they appeared in the $LT_{\mathbb{Z}}(R)$- module $\mathcal{O}_{sh}$ resp. the $LT_{\mathbb{Z}}(R_{as})$- module $\mathcal{O}_{as}$ as the formal exponential factor. We will now both give them a convergent footing. Note that for all $\sigma \in \Sigma$ and all $j \in \mathcal{J}$, the operator norms of the $M_{G_{\sigma}}$ and the $M_{F_{j}}$ satisfy
$$
||M_{G_{\sigma}}|| \leqslant 2 \text{ and }||M_{F_{j}}|| \leqslant 2.
$$
Therefore we choose our parameters $t_{\Sigma}=(t_{\sigma})$ and $t_{\mathcal{J}}=(t_{j}$ in respectively the spaces  $\ell_{1}(\Sigma)$ and $\ell_{1}(\mathcal{J})$ defined by 
$$
\ell_{1}(\Sigma)=\{ t_{\Sigma}=(t_{\sigma}) \mid \text{ all }t_{\sigma} \in \mathbb{R} \text{ and } \sum_{\sigma \in \Sigma} |t_{\sigma}| < \infty \}
$$
and 
$$
\ell_{1}(\mathcal{J})=\{ t_{\mathcal{J}}=(t_{j}) \mid \text{ all }t_{j} \in \mathbb{R} \text{ and } \sum_{j \in \mathcal{J}} |t_{j}| < \infty \}.
$$
We equip these two spaces respectively with the norms 
$$
||t_{\Sigma}||_{1}=\sum_{\sigma \in \Sigma} |t_{\sigma}| \text{ and } ||t_{\mathcal{J}}||_{1}=\sum_{j \in \mathcal{J}} |t_{j}|.
$$ 
Now we define two analytic maps. The first is $t_{\Sigma} \to \gamma_{\Sigma}(t_{\Sigma})$, with 
$$
\gamma_{\Sigma}(t_{\Sigma})=\exp( \sum_{\sigma \in \Sigma}  t_{\sigma} M_{G_{\sigma} })
$$
and maps $\ell_{1}(\Sigma)$ to $\GL(\mathcal{H}(\mathbb{C}))$. The second map is $t_{\mathcal{J}} \to \gamma_{\Sigma}(t_{\Sigma})$, where
$$
\gamma_{\mathcal{J}}(t_{\mathcal{J}})=\exp( \sum_{j \in \mathcal{J}}  t_{j} M_{F_{j} })
$$
and maps $\ell_{1}(\mathcal{J})$ to $\GL(\mathcal{H}(\mathbb{R}))$. 
The images of the maps $\gamma_{\Sigma}$ and $\gamma_{\mathcal{J}}$ we denote respectively by $\Gamma_{\Sigma}$ and $\Gamma_{\mathcal{J}}$.
Now we specify for both hierarchies the settings we work in. For the $\mathcal{C}_{sh}(\mathbb{C})$-hierarchy
we choose for the algebra $R(\mathbb{R})$ the $C^{\infty}$-functions on the space $\ell_{1}(\Sigma)$ with values in $\mathbb{R}$. Then $R$ consists of all $\mathbb{C}$-valued $C^{\infty}$-functions on $\ell_{1}(\Sigma)$, $[\gamma_{\Sigma}(t_{\Sigma})]$ is well-defined and all the coefficients of 
$$
[\gamma_{\Sigma}(t_{\Sigma})]=\exp( \sum_{\sigma \in \Sigma} t_{\sigma} G_{\sigma})
$$
belong to $R$. For each derivation $\partial_{\sigma}, \sigma \in \Sigma,$ we choose $\partial_{\sigma}=\frac{\partial}{\partial t_{\sigma}}$. In the case of the $\mathcal{C}_{as}(\mathbb{R})$-hierarchy
we choose for the algebra $R_{as}$ the $\mathbb{R}$-valued $C^{\infty}$-functions on the space $\ell_{1}(\mathcal{J})$. 
The matrix $[\gamma_{\mathcal{J}}(t_{\mathcal{J}})]$ is now well-defined and all the coefficients of 
$$
[\gamma_{\mathcal{J}}(t_{\mathcal{J}})]=\exp( \sum_{j \in \mathcal{J}} t_{j} F_{j})
$$
belong to $R_{as}$. For each derivation $\partial_{j}, j \in \mathcal{J},$ we choose $\partial_{j}=\frac{\partial}{\partial t_{j}}$.

The next step is the description of the group of transformations of the $\mathcal{H}(k)$, whose homogeneous space leads to solutions of the $\mathcal{C}_{sh}(\mathbb{C})$-hierarchy and the $\mathcal{C}_{as}(\mathbb{R})$-hierarchy. 
Our choice of the class of groups is similar to the ones that worked for the KP hierarchy \cite{Segal-Wilson}, its strict version \cite{HP3} and the discrete KP hierarchy and its strict version \cite{HPP2} and makes use of the two-sided ideal $S_{2}(\mathcal{H}(k))$ of Hilbert-Schmidt operators on $\mathcal{H}(k)$.
Recall it is defined by
$$
S_{2}(\mathcal{H}(k))=\{ A \in B(\mathcal{H}(k))\mid [A] =(a_{ij}), \sum_{i \in \mathbb{Z}}\sum_{j \in \mathbb{Z}} |a_{ij}|^{2} < \infty \}.
$$
By taking the transpose or the adjoint $S_{2}(\mathcal{H}(k))$ is clearly mapped bijectively onto itself  and becomes a Hilbert space w.r.t. the inner product
$$
(A,B)_{2}:=\sum_{n \in \mathbb{Z}}< \vec{e}\,(n) \mid A^{*}B \vec{e}\,(n) >.
$$
From the fact that the Hilbert-Schmidt operators form a two-sided ideal of compact operators, follows that one can introduce the group $G(2)(k) $ by
$$
G(2)(k)=\left\{ g 
\in \GL(\mathcal{H}(k)) \Biggm| g-\Id \in S_{2}(\mathcal{H}(k))
\right
\}.
$$
It is a normal subgroup of $\GL(\mathcal{H}(k))$, since $S_{2}(\mathcal{H}(k))$ is a two-sided ideal of $B(\mathcal{H}(k))$
One verifies directly that the space $S_{2}(\mathcal{H}(k))$ can be identified with the Lie algebra $\mathcal{G}(2)(k)$ of $G(2)(k)$ and that makes $G(2)(k)$ a Hilbert Lie group. Each 
$S_{2}(\mathcal{H}(k))$, $k=\mathbb{R} \text{ or }\mathbb{C} $, can be split into the direct sum of two Lie subalgebras corresponding to the decompositions of $LT_{\mathbb{Z}}(R)$ and $LT_{\mathbb{Z}}(R_{as})$ as treated in Section \ref{sec2}. For $k=\mathbb{C} $, the two subalgebras are
\begin{align}
\notag
&S_{2}(\mathcal{H}(\mathbb{C}))_{sh}=\{ A  \mid A^{*}=-A \} \text{ and }\\ \label{sh+-}
&S_{2}(\mathcal{H}(\mathbb{C}))_{-}=\{ A  \mid [A]=\sum_{i\leqslant 0} d_{i}\Lambda^{i}, d_{0} \in \mathcal{D}_{1}(\mathbb{R}), d_{i} \in \mathcal{D}_{1}(\mathbb{C}) \text{ for } i<0\}.
\end{align}
and for $k=\mathbb{R} $, the two subalgebras are
\begin{align}
\notag
&S_{2}(\mathcal{H}(\mathbb{R}))_{as}=\{ A  \mid A^{T}=-A \} \text{ and }\\ \label{as+-}
&S_{2}(\mathcal{H}(\mathbb{R}))_{\leqslant 0}=\{ A  \mid [A]=\sum_{i\leqslant 0} d_{i}\Lambda^{i}, \text{ for all }i \leqslant 0, d_{i} \in \mathcal{D}_{1}(\mathbb{R}) \}.
\end{align}
To the first of the two Lie subalgebras in (\ref{sh+-}) corresponds the subgroup $U(2)$ of all unitary operators in $G(2)(\mathbb{C})$ and to the second the group $P_{-}$ of all lower triangular matrices in $G(2)(\mathbb{C})$ that have on the central diagonal only real strict positive entries. Since $U(2) \cap P_{-}=\Id$, the map $I_{\mathbb{C}}: P_{-} \times U(2) \to G(2)(\mathbb{C})$ defined by $I_{\mathbb{C}}(p_{-},u))=p_{-}u$ is injective and locally a diffeomorphism. Beltita has shown in \cite{Beltita} that this map is also surjective. The decomposition of $G(2)(\mathbb{C})$ determined by the inverse of $I_{\mathbb{C}}$ is called its {\it Iwasawa decomposition}. The result of Beltita also implies that $\tilde{I}_{\mathbb{C}}(p_{-},u))=p_{-}^{-1}u$ is a diffeomorphism between between $P_{-} \times U(2)$ and $G(2)(\mathbb{C})$. We will use the twisted Iwasawa decompostion determined
by the inverse of $\tilde{I}_{\mathbb{C}}$: for each $g \in G(2)(\mathbb{C})$ there exist a $p_{-}(g) \in P_{-}$ and a $u(g) \in U(2)$ so that $g=p_{-}(g)^{-1}u(g)$.

Also both Lie subalgebras in (\ref{as+-}) correspond to Lie subgroups of $G(2)(\mathbb{R})$. The first is the Lie algebra of the orthogonal transformations $O(2)$ in $G(2)(\mathbb{R})$ and the second is the Lie algebra of the transformations $P_{\leqslant 0}$ in $G(2)(\mathbb{R})$ possessing a lower triangular matrix with a strict positive central diagonal. Also in this case we have $O(2) \cap P_{\leqslant 0}=\Id$ and thus the map $I_{\mathbb{R}}: P_{\leqslant 0} \times O(2) \to G(2)(\mathbb{R})$ defined by $I_{\mathbb{R}}(p_{\leqslant 0},o))=p_{-}o$ is injective and locally a diffeomorphism. In the same paper Beltita proved the surjectivity of $I_{\mathbb{R}}$ and the inverse of $I_{\mathbb{R}}$ is the Iwasawa decomposition of $G(2)(\mathbb{R})$. As in the complex case this implies that $\tilde{I}_{\mathbb{R}}(p_{\leqslant 0},o))=p_{-}^{-1}o$ is a diffeomorphism between the same varieties and we use the twisted Iwasasawa decomposition determined by the inverse of $\tilde{I}_{\mathbb{R}}$: for each $g \in G(2)(\mathbb{R})$ there exist a $p_{\leqslant 0}(g) \in P_{\leqslant 0}$ and a $o(g) \in O(2)$ so that $g=p_{\leqslant 0}(g)^{-1}o(g)$.

Now we come to the construction of the wave functions for both hierarchies. We start with the skew hermitian case. 
Take a $g \in G(2)(\mathbb{C})$, then conjugation of $g$ with an operator $\gamma_{\Sigma}(t_{\Sigma}) \in \Gamma_{\Sigma}$ yields a new element in $G(2)(\mathbb{C})$ to which we apply the twisted Iwasawa decomposition. Then we have for all $t_{\Sigma} \in \ell_{1}(\Sigma)$
$$
\gamma_{\Sigma}(t_{\Sigma}) g\gamma_{\Sigma}(t_{\Sigma})^{-1} =p_{-}(g)(t_{\Sigma})^{-1}u(g)(t_{\Sigma})
$$
and both $p_{-}(g)(t_{\Sigma})$ and $u(g)(t_{\Sigma})$ depend in a $C^{\infty}-$fashion of $t_{\Sigma}$. On the matrix level this identity yields
\begin{equation}
\label{wavedeco}
\Phi:=[p_{-}(g)(t_{\Sigma})][\gamma_{\Sigma}(t_{\Sigma})]=[p_{-}(g)(t_{\Sigma})]\varphi_{0}=[u(g)(t_{\Sigma})] [\gamma_{\Sigma}(t_{\Sigma})] [g]^{-1}.
\end{equation}
This shows that $\Phi$ is a candidate wave matrix for the $\mathcal{C}_{sh}(\mathbb{C})$-hierarchy, where all the products in $\Phi$ are no longer formal but real. Thanks to Proposition \ref{Pweak} we know that it suffices to show for all $\sigma \in \Sigma$ that
$$
\partial_{\sigma}(\Phi)=S_{\sigma} \Phi, \text{ with } S_{\sigma } \in \mathcal{SH}(R).
$$
On one hand we have that 
\begin{align*}
\partial_{\sigma}(\Phi)&=\{ \partial_{\sigma}([p(g)(t_{\Sigma})])[p(g)(t_{\Sigma})^{-1}] 
+[p(g)(t_{\Sigma})](G_{\sigma})[p(g)(t_{\Sigma})^{-1}]\}\Phi \\ \notag
&=S_{\sigma} \Phi, \text{ with } S_{\sigma} \in LT_{\mathbb{Z}}(R).
\end{align*}
On the other hand, if one applies $\partial_{\sigma}$ to the right hand side of equation (\ref{wavedeco}), then we get
\begin{align}
\label{rightfi}
\partial_{\sigma}(\Phi)&=\{ \partial_{\sigma}([u(g)(t_{\Sigma})])[u(g)(t_{\Sigma})^{-1}]  -[u(g)(t)]G_{\sigma}[u(g)(t_{\Sigma})]^{-1}\}\Phi\\ \notag
&=\{ \partial_{\sigma}([u(g)(t_{\Sigma})])[u(g)(t_{\Sigma})]^{*} -[u(g)(t_{\Sigma})]G_{\sigma}[u(g)(t_{\Sigma})]^{-1}\}\Phi
\end{align}
Since the matrix  $[u(g)(t_{\Sigma})] $ is unitary, applying $\partial_{\sigma}$ to the identity 
$$[u(g)(t_{\Sigma})] [u(g)(t_{\Sigma})]^{*}=\Id$$
shows that the matrix $\partial_{\sigma}([u(g)(t_{\Sigma})])[u(g)(t_{\Sigma})]^{*}$ is skew hermitian. 
Further, conjugating an skew hermitian matrix with a unitary one, yields again a skew hermitian matrix. Hence, also $[u(g)(t_{\Sigma})]G_{\sigma}[u(g)(t_{\Sigma})]^{-1}$ is skew hermitian and thus the right hand side of equation (\ref{rightfi}) is the product of a skew hermitian matrix and $\Phi$.
So, all the $S_{\sigma}$ are in $\mathcal{SH}(R)$ and $\Phi $ is a wave matrix for the $\mathcal{C}_{sh}(\mathbb{C})$-hierarchy. If we apply the construction to the element $gu$, with $u \in U(2)$, then we get
\begin{align*}
\gamma_{\Sigma}(t_{\Sigma}) gu \gamma_{\Sigma}(t_{\Sigma})^{-1} &=\gamma_{\Sigma}(t_{\Sigma}) g\gamma_{\Sigma}(t_{\Sigma})^{-1} \gamma_{\Sigma}(t_{\Sigma}) u \gamma_{\Sigma}(t_{\Sigma})^{-1}\\
&= p(g)(t_{\Sigma})^{-1} u(g)(t_{\Sigma}) \gamma_{\Sigma}(t_{\Sigma}) u \gamma_{\Sigma}(t_{\Sigma})^{-1}.
\end{align*}
Now each $\gamma_{\Sigma}(t_{\Sigma})$ is a unitary matrix in $\GL(\mathcal{H}(\mathbb{C}))$ and 
$S_{2}(\mathcal{H}(\mathbb{C}))$ is a two sided ideal in $B(\mathcal{H}(\mathbb{C}))$, so conjugating $u$ with $\gamma_{\Sigma}(t_{\Sigma})$ results in an element of $U(2)$. Hence we may conclude $p(gu)(t_{\Sigma})=p(g)(t_{\Sigma})$ and $u(gu)(t_{\Sigma})=u(g)(t_{\Sigma}) \gamma_{\Sigma}(t_{\Sigma}) u \gamma_{\Sigma}(t_{\Sigma})^{-1}$. So, both $g$ and $gu$ generate the same solution of the $\mathcal{C}_{sh}(\mathbb{C})$-hierarchy. 

The construction of wave matrices for the $\mathcal{C}_{as}(\mathbb{R})$-hierarchy starts with an element 
$g \in G(2)(\mathbb{R})$ and follows the same line of approach as in the skew hermitian case. We conjugate $g$ with an operator $\gamma_{\mathcal{J}}(t_{\mathcal{J}})$ from $\Gamma_{\mathcal{J}}$ in $G(2)(\mathbb{R})$ to which we apply the twisted Iwasawa decomposition of this group.
 Then we have for all $t_{\mathcal{J}} \in \ell_{1}(\mathcal{J})$
$$
\gamma_{\mathcal{J}}(t_{\mathcal{J}}) g\gamma_{\mathcal{J}}(t_{\mathcal{J}})^{-1} =p_{\leqslant 0}(g)(t_{\mathcal{J}})^{-1}o(g)(t_{\mathcal{J}})
$$
and both $p_{\leqslant 0}(g)(t_{\mathcal{J}})$ and $o(g)(t_{\mathcal{J}})$ depend in a $C^{\infty}-$fashion of $t_{\mathcal{J}}$. On the matrix level this identity yields
\begin{equation}
\label{wavedeco2}
\Psi:=[p_{\leqslant 0}(g)(t_{\mathcal{J}})][\gamma_{\mathcal{J}}(t_{\mathcal{J}})]=[p_{\leqslant 0(g)}(t_{\mathcal{J}})]\psi_{0}=[o(g)(t_{\mathcal{J}})] [\gamma_{\mathcal{J}}(t_{\mathcal{J}})] [g]^{-1}.
\end{equation}
This shows that $\Psi$ is a candidate wave matrix for the $\mathcal{C}_{as}(\mathbb{R})$-hierarchy, where all the products in both expressions for $\Psi$ are real. 
Thanks to Proposition \ref{Pweak} we know that it suffices to show for all $j \in \mathcal{J}$ that
$$
\partial_{j}(\Psi)=A_{j} \Psi, \text{ with } A_{j } \in \mathcal{AS}(R_{as}).
$$
On one hand we have that 
\begin{align*}
\partial_{j}(\Psi)&=\{ \partial_{j}([p_{\leqslant 0}(g)(t_{\mathcal{J}})])[p_{\leqslant 0}(g)(t_{\mathcal{J}})] ^{-1}
+[p_{\leqslant 0}(g)(t_{\mathcal{J}})]F_{j}[p_{\leqslant 0}(g)(t_{\mathcal{J}})]^{-1}\}\Psi \\ \notag
&=A_{j} \Psi, \text{ with } A_{j} \in LT_{\mathbb{Z}}(R).
\end{align*}
On the other hand, if one applies $\partial_{j}$ to the right hand side of equation (\ref{wavedeco2}), then we get
\begin{align}
\label{rightpsi}
\partial_{j}(\Psi)&=\{ \partial_{j}([o(g)(t_{\mathcal{J}})])[o(g)(t_{\mathcal{J}})^{-1}]  -[o(g)(t_{\mathcal{J}})]F_{j}[o(g)(t_{\mathcal{J}})]^{-1}\}\Psi\\ \notag
&=\{ \partial_{j}([o(g)(t_{\mathcal{J}})])[o(g)(t_{\mathcal{J}})]^{T} -[o(g)(t_{\mathcal{J}})]F_{j}[o(g)(t_{\mathcal{J}})]^{-1}\}\Psi
\end{align}
Since the matrix  $[o(g)(t)] $ is orthogonal, applying $\partial_{j}$ to the identity 
$$[o(g)(t_{\mathcal{J}})] [o(g)(t_{\mathcal{J}})]^{T}=\Id$$
shows that the matrix $\partial_{j}([o(g)(t_{\mathcal{J}})])[o(g)(t_{\mathcal{J}})]^{T}$ is anti-symmetric. 
Further, conjugating an anti-symmetric matrix with an orthogonal one, yields again an anti-symmetric matrix. Hence, also $[o(g)(t_{\mathcal{J}})]F_{j}[o(g)(t_{\mathcal{J}})]^{-1}$ is anti-symmetric and thus the right hand side of equation (\ref{rightpsi}) is the product of an anti-symmetric matrix and $\Psi$.
So, all the $A_{j}$ are in $\mathcal{AS}(R_{as})$ and $\Psi $ is a wave matrix for the $\mathcal{C}_{as}(\mathbb{R})$-hierarchy. If we apply the construction to the element $go$, with $o \in O(2)$, then we get
\begin{align*}
\gamma_{\mathcal{J}}(t_{\mathcal{J}}) go \gamma_{\mathcal{J}}(t_{\mathcal{J}})^{-1} &=\gamma_{\mathcal{J}}(t_{\mathcal{J}}) g\gamma_{\mathcal{J}}(t_{\mathcal{J}})^{-1} \gamma_{\mathcal{J}}(t_{\mathcal{J}}) 0 \gamma_{\mathcal{J}}(t_{\mathcal{J}})^{-1} \\
&= p(g)(t_{\mathcal{J}}) o(g)(t_{\mathcal{J}}) \gamma_{\mathcal{J}}(t_{\mathcal{J}}) o \gamma_{\mathcal{J}}(t_{\mathcal{J}})^{-1}.
\end{align*}
Since $\gamma_{\mathcal{J}}(t_{\mathcal{J}})$ is an orthogonal element of $\GL(\mathcal{H}(\mathbb{R}))$ and 
$S_{2}(\mathcal{H}(\mathbb{R}))$ is a two sided ideal in $B(\mathcal{H}(\mathbb{R}))$, conjugation of $o$ with 
$\gamma_{\mathcal{J}}(t_{\mathcal{J}})$ yields an element of $O(2)$.
Hence there follows $p_{\leqslant 0}(go)(t_{\mathcal{J}})=p_{\leqslant 0}(g)(t_{\mathcal{J}})$ and $o(go)(t_{\mathcal{J}})=o(g)(t_{\mathcal{J}}) \gamma_{\mathcal{J}}(t_{\mathcal{J}}) o \gamma_{\mathcal{J}}(t_{\mathcal{J}})^{-1}$. So, the elements $g$ and $go$ generate the same solution of the $\mathcal{C}_{as}(\mathbb{R})$-hierarchy. 
We summarize the results in a
\begin{theorem}
\label{T1}
Solutions of the $\mathcal{C}_{sh}(\mathbb{C})$-hierarchy and the $\mathcal{C}_{as}(\mathbb{R})$-hierarchy
can be obtained as follows:
\begin{enumerate}
\item[(a)] 
For each $g \in G(2)(\mathbb{C})$ and each $\sigma \in \Sigma$, consider the perturbation $$\mathcal{G}_{\sigma}(g):=[p_{-}(g)(t_{\Sigma})]G_{\sigma}[p_{-}(g)(t_{\Sigma})]^{-1}$$ of the basic direction $G_{\sigma}$, with the $\mathbb{Z} \times \mathbb{Z}$-matrix $[p_{-}(g)(t_{\Sigma})]$ from (\ref{wavedeco}). Then the 
$\{ \mathcal{G}_{\sigma}(g) \}$ are a solution of the $\mathcal{C}_{sh}(\mathbb{C})$-hierarchy. This solution does not change if one replaces $g$ by $gu$, with $u \in U(2)$. Hence the symmetric space $G(2)(\mathbb{C})/U(2)$ determines the solutions of the $\mathcal{C}_{sh}(\mathbb{C})$-hierarchy.
\item[(b)]
For each $g \in G(2)(\mathbb{R})$ and each $j \in \mathcal{J}$, consider the perturbation $$\mathcal{F}_{j}(g):=[p_{\leqslant 0}(g)(t_{\mathcal{J}})]F_{j}[p_{\leqslant 0}(g)(t_{\mathcal{J}})]^{-1}$$ of the basic direction $F_{j} $, with the $\mathbb{Z} \times \mathbb{Z}$-matrix $[p_{\leqslant 0}(g)(t_{\mathcal{J}})]$ from (\ref{wavedeco2}). Then the 
$\{ \mathcal{F}_{j}(g) \}$ are a solution of the $\mathcal{C}_{as}(\mathbb{R})$-hierarchy. This solution does not change if one replaces $g$ by $go$, with $o \in O(2)$. Hence the symmetric space $G(2)(\mathbb{R})/O(2)$ determines the solutions of the $\mathcal{C}_{sh}(\mathbb{R})$-hierarchy.
\end{enumerate}
\end{theorem}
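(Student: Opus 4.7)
The plan is to apply Proposition \ref{Pweak} to the candidate wave matrix $\Phi$ introduced in (\ref{wavedeco}) through the twisted Iwasawa decomposition of $\gamma_{\Sigma}(t_{\Sigma})g\gamma_{\Sigma}(t_{\Sigma})^{-1}$. The crucial feature of $\Phi$ is that it simultaneously admits two descriptions: the \emph{dressing} form $\Phi = [p_{-}(g)(t_{\Sigma})]\varphi_{0}$ displays $\Phi$ as an oscillating matrix whose perturbation factor $[p_{-}(g)(t_{\Sigma})]$ lies in $\mathcal{P}_{-}(\mathbb{R})$, putting $\Phi$ in the exact class to which Proposition \ref{Pweak}(a) applies; the \emph{unitary} form $\Phi = [u(g)(t_{\Sigma})][\gamma_{\Sigma}(t_{\Sigma})][g]^{-1}$ is what will be used to extract structural information about $\partial_{\sigma}(\Phi)$. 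The smooth $t_{\Sigma}$-dependence of $p_{-}(g)(t_{\Sigma})$ and $u(g)(t_{\Sigma})$ that the construction tacitly uses is furnished by the analyticity of $\gamma_{\Sigma}$ together with the fact, established by Beltita, that $\tilde{I}_{\mathbb{C}}$ is a diffeomorphism.

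Next I would differentiate the unitary form of $\Phi$ with respect to $\partial_{\sigma}$. Since $\gamma_{\Sigma}(t_{\Sigma})$ is generated by $\sum t_{\sigma}M_{G_{\sigma}}$ and $[g]^{-1}$ is $t_{\Sigma}$-independent, the computation collapses to
$$
\partial_{\sigma}(\Phi) = \bigl(\partial_{\sigma}([u(g)])\,[u(g)]^{*} - [u(g)]\,G_{\sigma}\,[u(g)]^{-1}\bigr)\Phi.
$$
Differentiating the unitarity identity $[u(g)][u(g)]^{*}=\Id$ shows that $\partial_{\sigma}([u(g)])[u(g)]^{*}$ is skew hermitian, and conjugation of the skew hermitian $G_{\sigma}$ by the unitary $[u(g)]$ remains skew hermitian. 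Hence $\partial_{\sigma}(\Phi) = S_{\sigma}\Phi$ with $S_{\sigma}\in\mathcal{SH}(R)$. Proposition \ref{Pweak}(a) then delivers that $\Phi$ is a wave matrix for the deformation $\mathcal{G}_{\sigma}(g) = [p_{-}(g)(t_{\Sigma})]G_{\sigma}[p_{-}(g)(t_{\Sigma})]^{-1}$, and the $\{\mathcal{G}_{\sigma}(g)\}$ automatically solve the $\mathcal{C}_{sh}(\mathbb{C})$-hierarchy.

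For the $U(2)$-invariance I would use that each $\gamma_{\Sigma}(t_{\Sigma})$ is itself unitary, being the exponential of a skew hermitian operator, and that $S_{2}(\mathcal{H}(\mathbb{C}))$ is a two-sided ideal of $B(\mathcal{H}(\mathbb{C}))$, so $\gamma_{\Sigma}(t_{\Sigma})\,u\,\gamma_{\Sigma}(t_{\Sigma})^{-1}\in U(2)$ for every $u\in U(2)$. Rewriting
$$
\gamma_{\Sigma}(t_{\Sigma})(gu)\gamma_{\Sigma}(t_{\Sigma})^{-1} = p_{-}(g)(t_{\Sigma})^{-1}\bigl[u(g)(t_{\Sigma})\,\gamma_{\Sigma}(t_{\Sigma})u\gamma_{\Sigma}(t_{\Sigma})^{-1}\bigr]
$$
displays a twisted Iwasawa factorization of the left-hand side, and uniqueness of $\tilde{I}_{\mathbb{C}}^{-1}$ forces $p_{-}(gu)(t_{\Sigma})=p_{-}(g)(t_{\Sigma})$. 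Since $\mathcal{G}_{\sigma}(g)$ depends on $g$ only through $p_{-}(g)$, the construction descends to the coset space $G(2)(\mathbb{C})/U(2)$, proving (a).

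Part (b) runs along the same lines with skew hermitian replaced by anti-symmetric, unitary by orthogonal, adjoint by transpose, Proposition \ref{Pweak}(a) by (b), and $\tilde{I}_{\mathbb{C}}$ by $\tilde{I}_{\mathbb{R}}$; orthogonality of $\gamma_{\mathcal{J}}(t_{\mathcal{J}})$ follows because each $M_{F_{j}}$ is anti-symmetric. I do not expect any single step to pose a genuine obstacle, since each ingredient is already in place above the statement; the main subtlety is in coordinating the two descriptions of $\Phi$ (resp.\ $\Psi$) so that the skew hermitian (resp.\ anti-symmetric) character of the coefficient $S_{\sigma}$ (resp.\ $A_{j}$) can in fact be read off, and in appealing correctly to the uniqueness of the twisted Iwasawa decomposition for the coset-invariance statements.
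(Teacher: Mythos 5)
Your proposal is correct and follows essentially the same route as the paper: define $\Phi$ via the twisted Iwasawa decomposition of $\gamma_{\Sigma}(t_{\Sigma})g\gamma_{\Sigma}(t_{\Sigma})^{-1}$, read off $\partial_{\sigma}(\Phi)=S_{\sigma}\Phi$ with $S_{\sigma}$ skew hermitian from the unitary form (and $S_{\sigma}\in LT_{\mathbb{Z}}(R)$ from the dressing form), invoke Proposition \ref{Pweak}, and obtain the $U(2)$- (resp.\ $O(2)$-) invariance from the fact that $\gamma_{\Sigma}(t_{\Sigma})u\gamma_{\Sigma}(t_{\Sigma})^{-1}\in U(2)$ together with uniqueness of the decomposition. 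The only point to make fully explicit is the computation of $\partial_{\sigma}(\Phi)$ from the dressing form $[p_{-}(g)(t_{\Sigma})]\varphi_{0}$, which is what guarantees $S_{\sigma}$ actually lies in $LT_{\mathbb{Z}}(R)$ and hence in $\mathcal{SH}(R)$; you flag this coordination as the main subtlety, and the paper carries it out exactly as you describe.
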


\end{document}